\documentclass[journal]{IEEEtran}
\usepackage{amsmath,amsfonts,amssymb}  
\usepackage{amsthm}                    
\usepackage{amsbsy}                    
\usepackage{mathrsfs}                  
\newtheoremstyle{italicindent}
  {3pt}                
  {3pt}                
  {\itshape}           
  {\parindent}        
  {\itshape}           
  {:}                  
  {.5em}               
  {}                   
\theoremstyle{italicindent}
\newtheorem{theorem}{Theorem}
\newtheorem{proposition}{Proposition}
\makeatletter
\renewenvironment{proof}[1][\proofname]{\par
  \pushQED{\qed}%
  \normalfont \topsep6\p@\@plus6\p@\relax
  \trivlist
  \item[\indent\itshape #1:]\ignorespaces
}{%
  \popQED\endtrivlist\@endpefalse
}
\makeatother
\usepackage{algpseudocode}             
\usepackage{algorithm}                 
\usepackage{float}                     
\usepackage{array}                     
\usepackage{comment}
\usepackage{textcomp}                  
\usepackage{stfloats}                  
\usepackage{url}                       
\usepackage{verbatim}                  
\usepackage{subfigure}                 
\usepackage{subfig}                    
\usepackage{cite}                      
\usepackage{CJKutf8}                 
\usepackage{bm}
\usepackage{graphicx}  
\usepackage{booktabs}
\usepackage{multirow}
\usepackage{caption}
\begin{document}
\begin{CJK}{UTF8}{gbsn}
\title{Array Zooming Optimization for Near-Field Localization With Movable Antennas}
\author{Yuxin Duan, Boyu Teng, Xiaojun Yuan, \textit{Fellow, IEEE} and Rui Wang, \textit{Senior Member, IEEE}}
       
\maketitle

\begin{abstract}
The emergence of movable antenna (MA) technology provides a promising way to enhance wireless sensing and communication by introducing spatial degrees of freedom through dynamic array reconfiguration. 
In near-field localization, achieving high resolution at low cost necessitates the adoption of sparse arrays. However, such sparsity tends to introduce spatial ambiguity due to aliasing effects. 
To resolve this resolution-ambiguity dilemma, this paper proposes an MA-enabled array zooming (AZ) system. First, we design a multi-measurement array zooming system that dynamically adjusts antenna spacings. By fusing the observational information from different measurements, the proposed AZ system effectively mitigates spatial aliasing while maintaining spatial resolution. Second, to quantify the performance limits under the severe multi-modal distributions inherent in sparse near-field sensing, we theoretically analyze the false peak distribution and derive a tighter performance lower bound, which incorporates the false detection probability. Third, considering that multiple false peaks may exist in practical multi-modal distributions, we propose an optimization algorithm for the AZ system to suppress false peaks and minimize the localization error. Extensive numerical results demonstrate that the proposed AZ strategy adaptively optimizes array configurations under varying signal-to-noise ratios (SNRs), substantially outperforming both conventional fixed-spacing arrays and Cramér–Rao bound (CRB)-based AZ benchmarks in localization accuracy.
\end{abstract}

\begin{IEEEkeywords}
Movable antenna(MA) array, near-field localization, false peak suppression, antenna position optimization.
\end{IEEEkeywords}

\section{Introduction}
The evolution of sixth generation mobile communication systems toward higher frequency bands and massive antenna arrays has driven integrated sensing and communication (ISAC) to become a research focus \cite{ISAC1,ISAC2,ISAC3}. With the deployment of extremely large scale antenna arrays and the adoption of millimeter wave and terahertz bands, electromagnetic wave propagation shifts fundamentally from a far field planar wavefront model to a near field spherical wavefront model \cite{Lu2022ELAA, Cui2023NF, Han2024NFTutorial}. The near field effect enables precise distance estimation by leveraging distinct wavefront curvatures, significantly improving angular resolution and facilitating high precision single station localization. Consequently, near field localization provides transformative technical support for emerging scenarios like autonomous driving and industrial automation \cite{Liu2022ISAC, Zhang20196G}.

Achieving high spatial resolution, however, requires a large aperture array. In practice, realizing a large aperture with dense fixed antennas demands a large number of radio frequency chains and causes excessive hardware costs. Deploying sparse arrays provides an effective solution for reducing the number of antennas and radio frequency chains \cite{xishu_jiangben1,xishu_jiangben2}, but large inter-element spacing generated by sparse arrays may cause grating lobes, which leads to spatial ambiguity and degrade localization performance \cite{xishu_jiaodumohu1,xishu_jiaodumohu2}. In addition, traditional sparse arrays use fixed-position antennas, which cannot fully exploit spatial degrees of freedom and limit the ability to achieve high-resolution parameter estimation \cite{sparse1,sparse2}. 
These challenges motivate the need for spatially reconfigurable arrays that can compensate for sparsity across multiple measurements.

Movable antenna (MA) technology has emerged as a new paradigm to break the physical limitations of fixed-position antennas by dynamically adjusting the positions of antenna elements within a local continuous region \cite{Zhu_MA_Comms, Wong_FAS_Mag, B_ning, Tutorial}. Movable antenna technology has been shown to provide significant performance gains in wireless communication systems \cite{Performance,Ma_MIMO_Capacity,Multiuser_Communication2,Multiuser_Communication1,Beamforming2,Beamforming3,Beamforming,Communications}. Beyond far field scenarios, MA has been extended to near field communications \cite{Communications}, where antenna positions are optimized under spherical wave models to approach performance upper bounds for both digital and analog beamforming architectures.

Building upon the success in communication networks, existing studies extend movable antennas to wireless sensing and ISAC. In \cite{MA_Localization1}, one and two dimensional movable antenna coordinates are optimized to minimize the Cramér–Rao bound (CRB) of the estimation mean squared error, reducing angle estimation errors and mitigating angular ambiguity compared to static arrays. In ISAC systems, MAs are employed to jointly optimize communication and sensing performance, including maximizing the sum of communication rate and sensing mutual information or maximizing sensing  signal-to-interference-plus-noise ratio (SINR) under communication constraints, with algorithms developed for antenna position optimization \cite{ISAC4,ISAC5}. Moreover, MA‑aided ISAC can also leverage statistical channel state information (CSI) to design antenna positions, reducing movement overhead while satisfying sensing CRB constraints \cite{ISAC6}. 

To further exploit the reconfigurability of MA in sparse near‑field localization, this paper proposes an MA array zooming system. The array zooming system can dynamically adjust antenna spacings and fuse information from multiple different measurements, thereby mitigating spatial aliasing while preserving spatial resolution and achieving high-precision near-field localization. 
To achieve high parametric resolution in near-field localization while reducing hardware expenditure, a limited number of MAs must be distributed over a massive aperture array. This deployment often results in a highly sparse array geometry and generates a large number of local maxima of the log-likelihood function. 
Under such a multi-modal log-likelihood function, the CRB fails to serve as a tight performance lower bound.
To address this issue, we analyze the distribution characteristics of near-field false peaks for uniformly spaced sparse arrays and derive a tight performance lower bound to quantify the fundamental limit of near-field localization. 
Furthermore, we propose an optimization algorithm for the array zooming system. The objective function of this algorithm incorporates the false detection probabilities of multiple high false peaks and their corresponding estimation mean square error (MSE). 
The main contributions of this paper are summarized as follows:
\begin{itemize}
    \item We propose a MA enabled array zooming system for near-field localization based on maximum likelihood (ML) estimation. By exploiting the dynamic reconfigurability of MAs, the array zooming system enables multiple measurements and fuses the information from different measurements to achieve high-precision near-field localization. 
    We characterize the false peak distribution of uniformly spaced MA sparse arrays in the near field, providing a theoretical foundation for subsequent array zooming optimization.
    
    \item We derive a new MSE lower bound to quantify the performance limit of near-field localization under multi-modal log-likelihood distributions. Unlike the CRB, the derived MSE lower bound accounts for the false detection probability that a false peak exceed the true peak, thus serving as a tighter lower bound than the CRB in noisy environments.
    
    \item Furthermore, we develop an optimization algorithm for the array zooming system, which can adaptively determine the optimal antenna spacings under different signal-to-noise ratios (SNRs). 
    By formulating an objective function  that accounts for potential detection errors at multiple high false peaks, the proposed strategy effectively suppresses spatial aliasing while maintaining high parametric resolution. 
    Extensive numerical results demonstrate that the optimized array zooming system outperforms fixed-spacing arrays and CRB-based baselines, effectively suppressing false
    peaks and approaching the derived tight performance lower bound when the SNR $\ge0$ dB.
\end{itemize}

The remainder of this paper is organized as follows. Section II establishes the system model for 2D movable antenna array. Section III analyzes near-field false peak distribution and derives a tight MSE lower bound incorporating false detection probability. Section IV proposes an optimization algorithm for array zooming system. Numerical results and discussions are provided in Section V. Section VI concludes the whole work.

Notation: The bold lowercase letters (e.g., $\mathbf{x}$) and bold uppercase letters (e.g., $\mathbf{X}$) denote vectors and matrices, respectively. The $\ell_2$ norm of a vector $\mathbf{x}$ is denoted by $\|\mathbf{x}\|_2$. The $(\cdot)^{\mathrm{T}}$, $(\cdot)^*$, and $(\cdot)^\mathrm{H}$ denote the transpose, conjugate, and Hermitian, respectively. 
The overline $\overline{(\cdot)}$ denotes the complement of a set.
The $\mathcal{CN}(0, \sigma^2)$ denotes the circularly symmetric complex Gaussian distribution with zero mean and variance $\sigma^2$. 
The $\|\cdot\|_{F}$ denotes the Frobenius norm and the $\langle\cdot,\cdot\rangle$ denotes the inner product of two matrices.
The symbol $\mathcal{O}(\cdot)$ denotes the standard big-$O$ notation, characterizing the asymptotic computational complexity.

\section{Array Zooming System Model}

\subsection{System Description}
We consider an uplink single-user SIMO system as shown in Fig.~\ref{fig:System model}. The system consists of a BS equipped with a 2D MA array and a user with a single antenna. The MA array comprises $N_{\mathrm{B}} = N_{\mathrm{B}, x} \times N_{\mathrm{B}, y}$ MAs arranged in a uniform planar array (UPA) configuration, where $N_{\mathrm{B}, x}$ and $N_{\mathrm{B}, y}$ denote the number of MAs along the horizontal and vertical directions, respectively. In practice, UPAs are widely adopted for near-field localization systems due to their well-established design principles and implementation simplicity. To leverage the mobility of MAs while maintaining this structural advantage, we constrain all MAs to maintain uniform spacing $d^{(t)}$ at each measurement instant $t$. Specifically, the antenna spacing $d^{(t)}$ takes values from a discrete set $\mathcal{D} = \{d_\mathrm{min}, d_\mathrm{min} + \Delta d, \ldots, d_\mathrm{max}\}$, where $\Delta d$ denotes the adjustment step size. Based on the maximum spacing $d_\mathrm{max}$, the physical boundaries of the MAs are defined by the continuous 2D region $\mathcal{C} = [-\frac{A}{2}, \frac{A}{2}] \times [-\frac{B}{2}, \frac{B}{2}]$, where $A = (N_{\mathrm{B},x}-1)d_\mathrm{max}$ and $B = (N_{\mathrm{B},y}-1)d_\mathrm{max}$ represent the maximum array sizes in the $x$- and $y$-directions, respectively. This configuration allows the array to operate in different modes, as illustrated in Fig.~\ref{fig:System model}. When $d^{(t)}$ increases, the array operates in  a ``zoom-out'' mode. Conversely, when $d^{(t)}$ decreases, the array operates in a “zoom-in” mode. This dynamic reconfiguration capability, referred to as array zooming, enables the BS to optimize localization performance adaptively. Accordingly, the 2D position of the $(i, j)$-th MA in the array plane at time instant $t$ is given by $\mathbf{s}_{i,j}^{(t)} = [i d^{(t)}, j d^{(t)}]^{\mathrm{T}}$, where $i \in \mathcal{I}_{N_{\mathrm{B}, x}}$ and $j \in \mathcal{I}_{N_{\mathrm{B}, y}}$ are indices from the index set $\mathcal{I}_{N} = \{-\lceil\frac{N-1}{2}\rceil, \ldots, \lfloor\frac{N-1}{2}\rfloor\}$.

\begin{figure}
    \centering
    \captionsetup{labelsep=period, font=small}
    \includegraphics[width=0.9\linewidth]{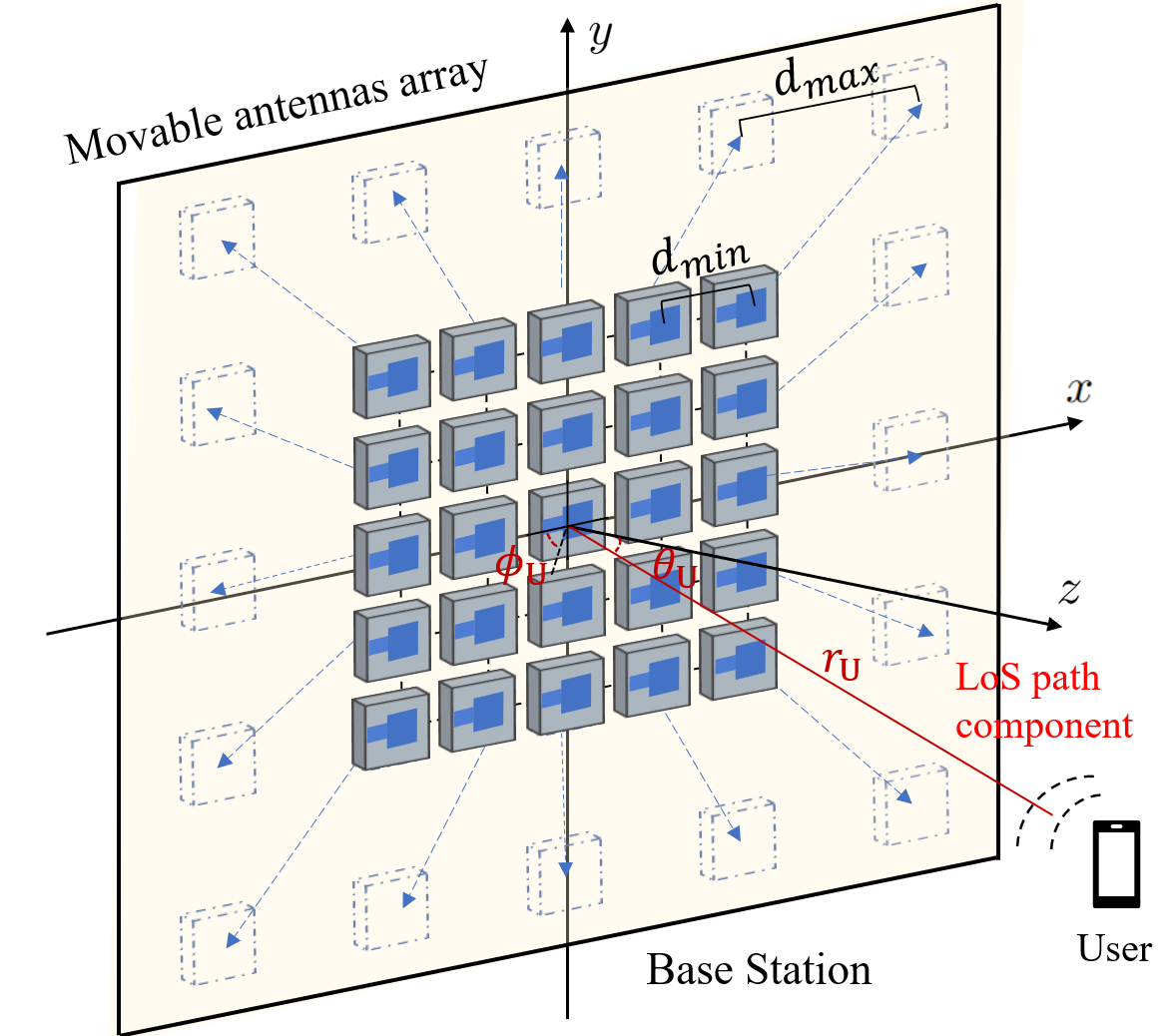}
    \caption{System model. The array operates in a “zoom-out” mode with antenna spacing increasing.}
    \label{fig:System model}
\end{figure}

\begin{figure*}[t!]
\centering
\captionsetup{labelsep=period, font=small}
\subfigure[]
{
\begin{minipage}[b]{0.45\textwidth}
\centering
\includegraphics[width=\textwidth]{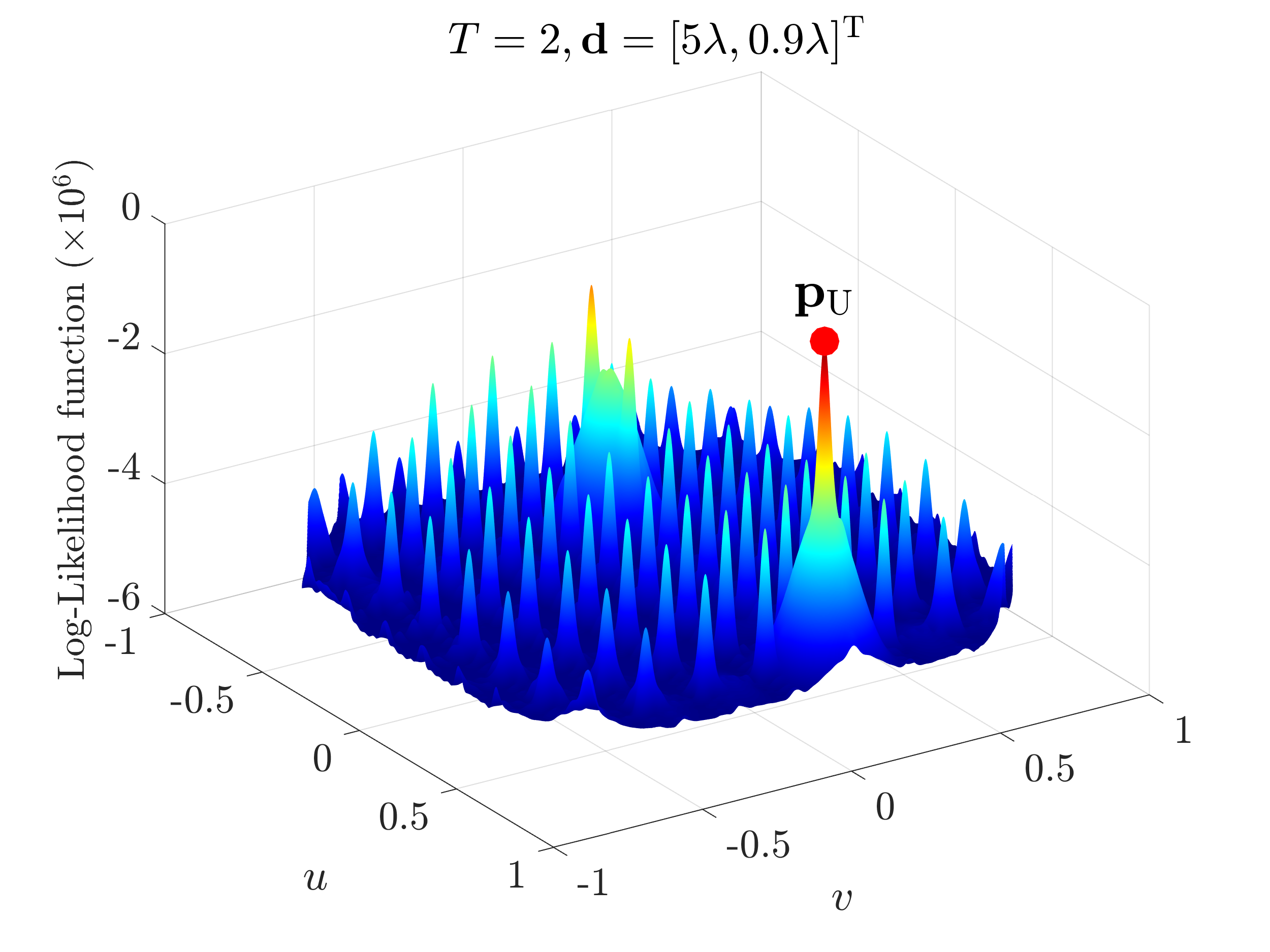}
\end{minipage}
}
\subfigure[]
{
\begin{minipage}[b]{0.45\textwidth}
\centering
\includegraphics[width=\textwidth]{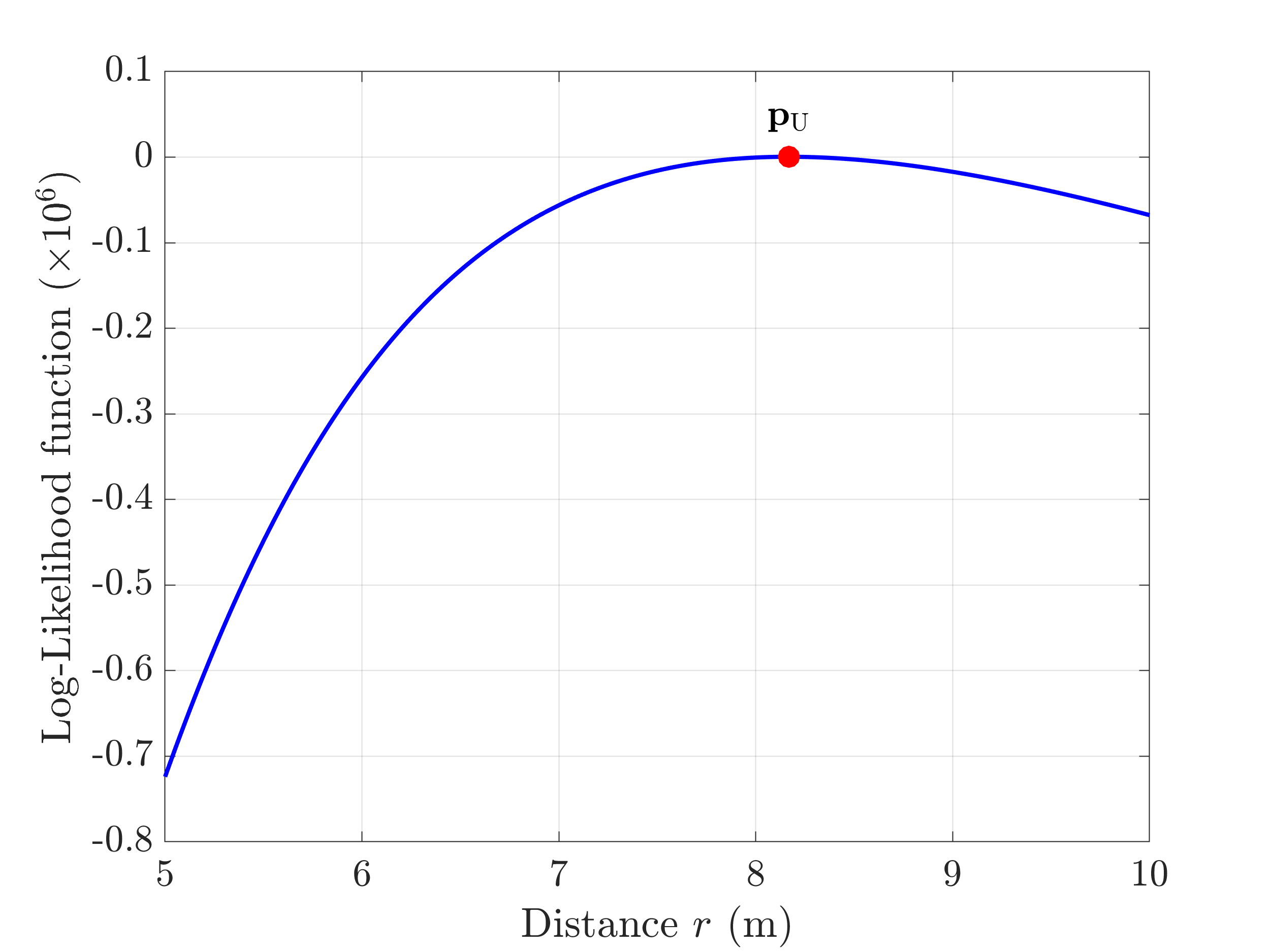}
\end{minipage}
}
\caption{Illustrations of a single-user near-field localization scenario where the user is located at distance $r=r_{\mathrm{U}}=8.168$ m from the BS array center. The SNR is fixed at 50 dB. The red point denotes the user’s true position. (a) The distribution in \eqref{eq:loglikelihood_function} with $T=2$, $\mathbf{d}=[5\lambda,0.9\lambda]$, $r=r_{\mathrm{U}}$ and $u,v$ are the cosines of the AoAs along the $x$-axis and $y$-axis.  (b) The distribution in \eqref{eq:loglikelihood_function} with $T=2$, $\mathbf{d}=[5\lambda,0.9\lambda]$ and $(u,v) = (u_{\mathrm{U}},v_{\mathrm{U}})=(0.717, 0.093)$. }
\label{fig:likelihood_comparison_25dB}
\end{figure*}

We establish a 3D Cartesian coordinate system centered at the BS, where axes $x$ and $y$ are defined as the horizontal and vertical directions of the 2D MA array plane, respectively, while axis $z$ is perpendicular to the array plane. Denote by $\mathbf{p}_{\mathrm{U}} = [x,y,z]^{\mathrm{T}}$ the user position and by $\mathbf{p}_{\mathrm{BS}}= \mathbf{0}$ the center of the BS array. Accordingly, $r_{\mathrm{U}} = \|\mathbf{p}_{\mathrm{BS}}-\mathbf{p}_{\mathrm{U}}\|$ denotes the distance between the user and the center of the BS array. The user position can be expressed by $\mathbf{p}_{\mathrm{U}} = [r_{\mathrm{U}}\sin\theta_{\mathrm{U}} \cos\phi_{\mathrm{U}},r_{\mathrm{U}}\sin\theta_{\mathrm{U}} \sin\phi_{\mathrm{U}},r_{\mathrm{U}}\cos\theta_{\mathrm{U}}]^{\mathrm{T}}$, where $\phi_{\mathrm{U}}$ and $\theta_{\mathrm{U}}$ are the azimuth and elevation angles of the user, respectively. Extending the 2D array coordinates $\mathbf{s}_{i,j}^{(t)}$ to this 3D space, the position of the $(i, j)$-th MA at time instant $t$ is expressed as $\mathbf{p}_{i, j}^{(t)} = [(\mathbf{s}_{i,j}^{(t)})^{\mathrm{T}}, 0]^{\mathrm{T}} = [i d^{(t)}, j d^{(t)}, 0]^{\mathrm{T}}$.

\subsection{Signal Model}
The BS estimates the user's location based on multiple measurements of the received signals. We consider a quasistatic flat-fading channel model, where the user's position remains unchanged.
We assume that the user is located in the near-field region of the BS array, i.e., $R_{\text{N}} < r < R_{\text{F}}$, where $R_{\text{N}} \triangleq \sqrt[3]{\dfrac{D^4}{8\lambda}}$ \cite{Fraunhofer} and $R_{\text{F}} \triangleq \dfrac{2D^2}{\lambda}$ \cite{Nearfield} are the Fresnel distance and the Fraunhofer distance respectively, and $D \triangleq d_\mathrm{max}\left((N_{\mathrm{B},x}-1)^2 + (N_{\mathrm{B},y}-1)^2\right)^{\frac{1}{2}}$ represents the maximum array aperture size, achieved when the antennas are fully extended. Furthermore, we assume that the considered near-field channel only consists of line-of-sight (LoS) components. 
Then, the channel coefficient between the user position and the $(i,j)$-th MA at the $t$-th measurement is given by 
\begin{equation}
\begin{aligned}
h_{(i,j)}^{(t)} = \alpha^{(t)} \frac{1}{r_{(i,j)}^{(t)}} e^{-\mathrm{j}\frac{2\pi}{\lambda}r_{(i,j)}^{(t)} },
\end{aligned}
\end{equation}
where $\mathrm{j}=\sqrt{-1}$, $\alpha^{(t)}$ is the complex channel gain including the position-independent terms at the $t$-th measurement. $r_{(i,j)}^{(t)} = \left\lVert \mathbf{p}_{(i,j)}^{(t)} - \mathbf{p}_{{\mathrm{U}}} \right\rVert$ represents the distance between the user and the $(i,j)$-th MA of the BS array at the $t$-th measurement and can be expressed as 
\begin{equation}
\begin{aligned}
r_{(i,j)}^{(t)} = \sqrt{(r_{\mathrm{U}} u_{\mathrm{U}} - id^{(t)})^2 + (r_{\mathrm{U}} v_{\mathrm{U}} - jd^{(t)})^2 + (r_{\mathrm{U}} \cos\theta_{\mathrm{U}})^2},
\end{aligned}
\end{equation}
where 
\begin{subequations}
\begin{align}
&u_{\mathrm{U}} \triangleq \sin\theta_{\mathrm{U}} \cos\phi_{\mathrm{U}},\\ &v_{\mathrm{U}} \triangleq \sin\theta_{\mathrm{U}} \sin\phi_{\mathrm{U}}.
\end{align}
\end{subequations}
Note that $u_{\mathrm{U}}$ (or $v_{\mathrm{U}}$) is the cosine of the angle between the user direction and the $x$-axis (or $y$-axis). Applying Taylor series expansion up to the second order, we obtain
\begin{equation}
\begin{aligned}
r_{(i,j)}^{(t)} &\approx r_{\mathrm{U}} - id^{(t)}u_{\mathrm{U}} - jd^{(t)}v_{\mathrm{U}} + \frac{i^2(d^{(t)})^2}{2r_{\mathrm{U}}}(1 - u_{\mathrm{U}}^2) \\
&+ \frac{j^2(d^{(t)})^2}{2r_{\mathrm{U}}}(1 - v_{\mathrm{U}}^2) - \frac{u_{\mathrm{U}} v_{\mathrm{U}}}{r_{\mathrm{U}}}ij(d^{(t)})^2.
\label{eq:dis}
\end{aligned}
\end{equation}

Following \cite{signal_amplitude}, 
we assume equal path loss from all MAs in the radiative near field. 
Then, the channel coefficient between the user position and the $(i,j)$-th MA at the $t$-th measurement can be expressed by
\begin{equation}
\begin{aligned}
h_{(i,j)}^{(t)} = \frac{\alpha^{(t)} }{r_{(0,0)}^{(t)}} e^{-\mathrm{j}\frac{2\pi}{\lambda}r_{(i,j)}^{(t)} }.
\end{aligned}
\end{equation}

By stacking all the $(i,j)$-th MA channel elements, we obtain the channel between the user and the BS at the $t$-th measurement as
\begin{equation}
\begin{aligned}
\bm{{\mathrm{H}}}^{(t)} =
\begin{bmatrix}
h_{(1,1)}^{(t)}& \cdots & h_{(1,N_y)}^{(t)}\\
\vdots & \ddots & \vdots \\
h_{(N_x,1)}^{(t)}& \cdots & h_{(N_x,N_y)}^{(t)}\\
\end{bmatrix}.
\end{aligned}
\end{equation}
Specifically, $\mathbf{H}^{(t)}$ can be expressed as $\mathbf{H}^{(t)} = \frac{\alpha^{(t)}}{r_{(0,0)}^{(t)}} \mathbf{A}(\mathbf{p}_{\mathrm{U}}, d^{(t)})$, where $\mathbf{A}(\mathbf{p}_{\mathrm{U}}, d^{(t)}) \in \mathbb{C}^{N_{\mathrm{B},x} \times N_{\mathrm{B},y}}$ denotes the near-field array response matrix. The $(i,j)$-th entry of $\mathbf{A}(\mathbf{p}_{\mathrm{U}}, d^{(t)})$ is given by
\begin{equation}
\left[\mathbf{A}(\mathbf{p}_{\mathrm{U}}, d^{(t)})\right]_{i,j} = e^{-\mathrm{j}\frac{2\pi}{\lambda} r_{i,j}^{(t)}}.
\label{eq:steering_t}
\end{equation}
Then, the received signal matrix at the $t$-th measurement is given by
\begin{equation}
\mathbf{Y}^{(t)} = \beta^{(t)} \mathbf{A}(\mathbf{p}_{\mathrm{U}}, d^{(t)}) + \mathbf{N}^{(t)},
\label{eq:signal_t}
\end{equation}
where $\beta^{(t)} = \frac{\alpha^{(t)}}{r_{(0,0)}^{(t)}}x^{(t)}$ with $x^{(t)}$ being the transmitted signal from the user at the $t$-th measurement and $\mathbf{N}^{(t)} \in \mathbb{C}^{N_{\mathrm{B},x} \times N_{\mathrm{B},y}}$ is the additive white Gaussian noise matrix at the $t$-th measurement with entries independently distributed as $\mathcal{CN}(0, \sigma^2)$ with $\sigma^2$ being the noise power.

\subsection{Location Estimation}
The BS estimates the user's position from the received signals by using the ML principle. For an arbitrary position $\mathbf{p} = [ru, rv, r\cos\theta]^{\mathrm{T}}$,  where $u=\sin\theta\cos\phi$ and $v=\sin\theta\cos\phi$ are the cosines of the AoAs along the $x$-axis and $y$-axis and $\phi$ and $\theta$ represent the azimuth and elevation angles of the arbitrary position respectively, the least-squares estimate of the complex amplitude $\beta^{(t)}$ is given by
\begin{equation}
\hat{\beta}^{(t)} = \frac{\langle\mathbf{A}(\mathbf{p},d^{(t)}),\mathbf{Y}^{(t)}\rangle}{\|\mathbf{A}(\mathbf{p},d^{(t)})\|_{F}^{2}},
\end{equation}
Substituting $\hat{\beta}^{(t)}$ into the log-likelihood function for a single measurement $t$ yields
\begin{equation}
\mathcal{L}(\mathbf{p}; d^{(t)}) = \frac{1}{\sigma^2}\frac{\left|\langle\mathbf{A}(\mathbf{p},d^{(t)}),\mathbf{Y}^{(t)}\rangle\right|^2}{\|\mathbf{A}(\mathbf{p},d^{(t)})\|_{F}^2} + c_t,
\label{eq:likelihood_single_measurement}
\end{equation}
where $c_t = -N_{\mathrm{B}}\ln(\pi\sigma^2) - \frac{1}{\sigma^2}\|\mathbf{Y}^{(t)}\|_{F}^2$ is a constant independent of $\mathbf{p}$.
For $T$ independent measurements, the log-likelihood function is expressed as
\begin{equation}
\mathcal{L}(\mathbf{p};\mathbf{d}) = \sum_{t=1}^{T} \mathcal{L}(\mathbf{p}; d^{(t)}) = \sum_{t=1}^{T}\frac{1}{\sigma^{2}}\frac{\left|\langle\mathbf{A}(\mathbf{p},d^{(t)}),\mathbf{Y}^{(t)}\rangle\right|^{2}}{\|\mathbf{A}(\mathbf{p},d^{(t)})\|_{F}^{2}} + c,
\label{eq:loglikelihood_function}
\end{equation}
where $\mathbf{d} = [d^{(1)},d^{(2)},\ldots,d^{(T)}]^{\mathrm{T}}$ is the antenna spacing configuration vector and $c = \sum_{t=1}^T c_t$.
The ML estimation for any given $\mathbf{d}$ is obtained by solving
\begin{equation}
\max_{\mathbf{p}} \mathcal{L}(\mathbf{p};\mathbf{d}).
\label{eq:max_loglikelihood_function}
\end{equation}
The ML estimation problem for near-field localization presents optimization challenges. This complexity stems from the strong non-convexity of the log-likelihood function in \eqref{eq:loglikelihood_function}, as illustrated in Fig.~\ref{fig:likelihood_comparison_25dB}(a). 
Specifically, spatial aliasing occurs when the antenna spacing exceeds half a wavelength, which generates false peaks in the spatial domain. These peaks manifest as local maxima in \eqref{eq:max_loglikelihood_function}, which complicates global optimization. 
Notably, Fig.~\ref{fig:likelihood_comparison_25dB}(b) reveals that the log-likelihood function exhibits a unique peak at $r_{\mathrm{U}}$ along the range dimension. We thus fix $r=r_{\mathrm{U}}$ and focus subsequent analysis on the angular domain, which is the primary source of the non-convexity of $\mathcal{L}(\mathbf{p}; \mathbf{d})$.

\section{Performance Analysis}
The likelihood values of false peaks may exceed that of the true peak in noisy environments, thereby degrading localization performance. For performance analysis, we now analyze the characteristics of near-field false peaks to lay a theoretical foundation for system optimization.

\begin{figure*}[t!]
\centering
\captionsetup{labelsep=period, font=small}
\subfigure[]
{
\begin{minipage}[b]{0.45\textwidth}
\centering
\includegraphics[width=\textwidth]{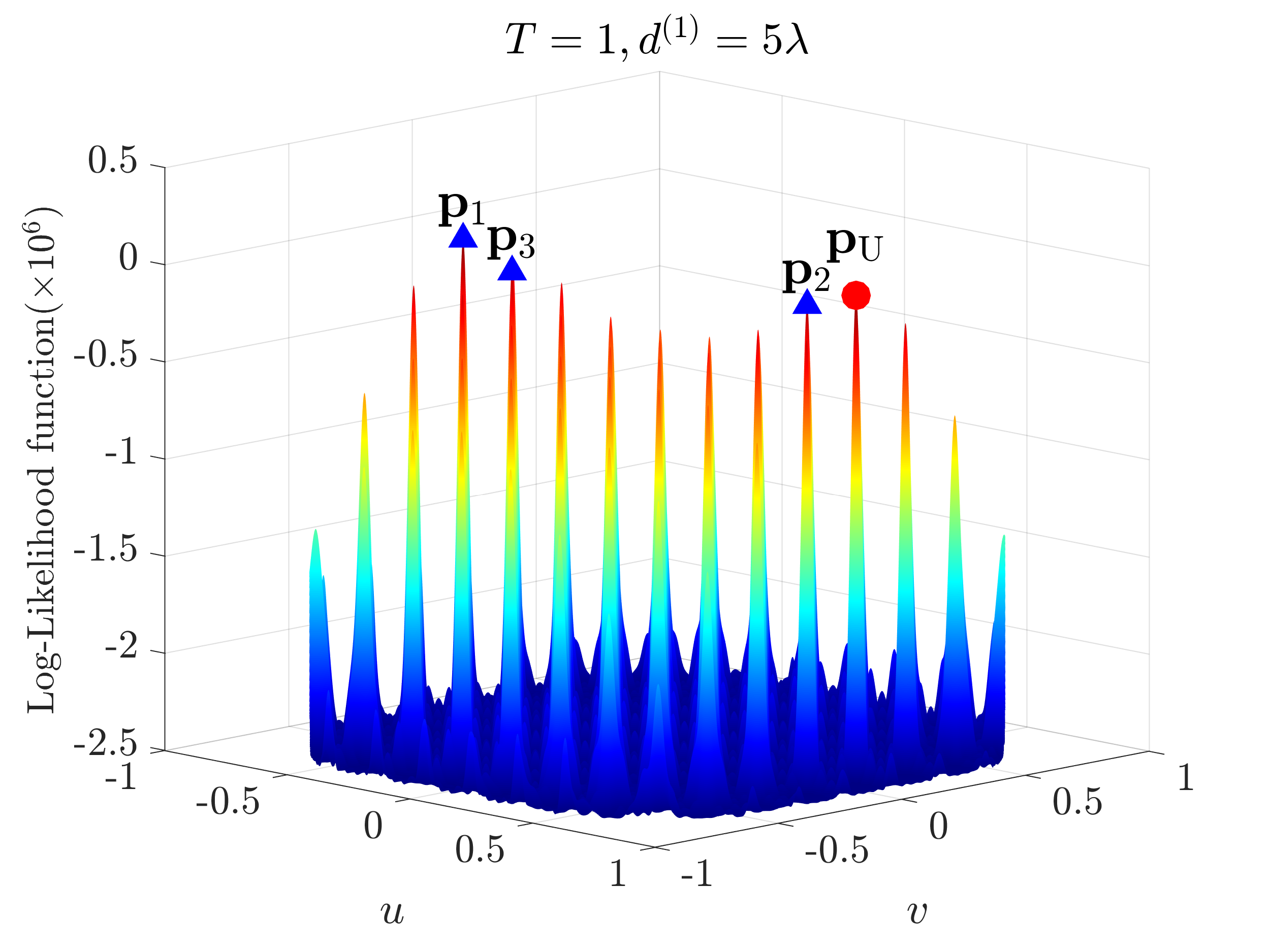}
\end{minipage}
}
\subfigure[]
{
\begin{minipage}[b]{0.45\textwidth}
\centering
\includegraphics[width=\textwidth]{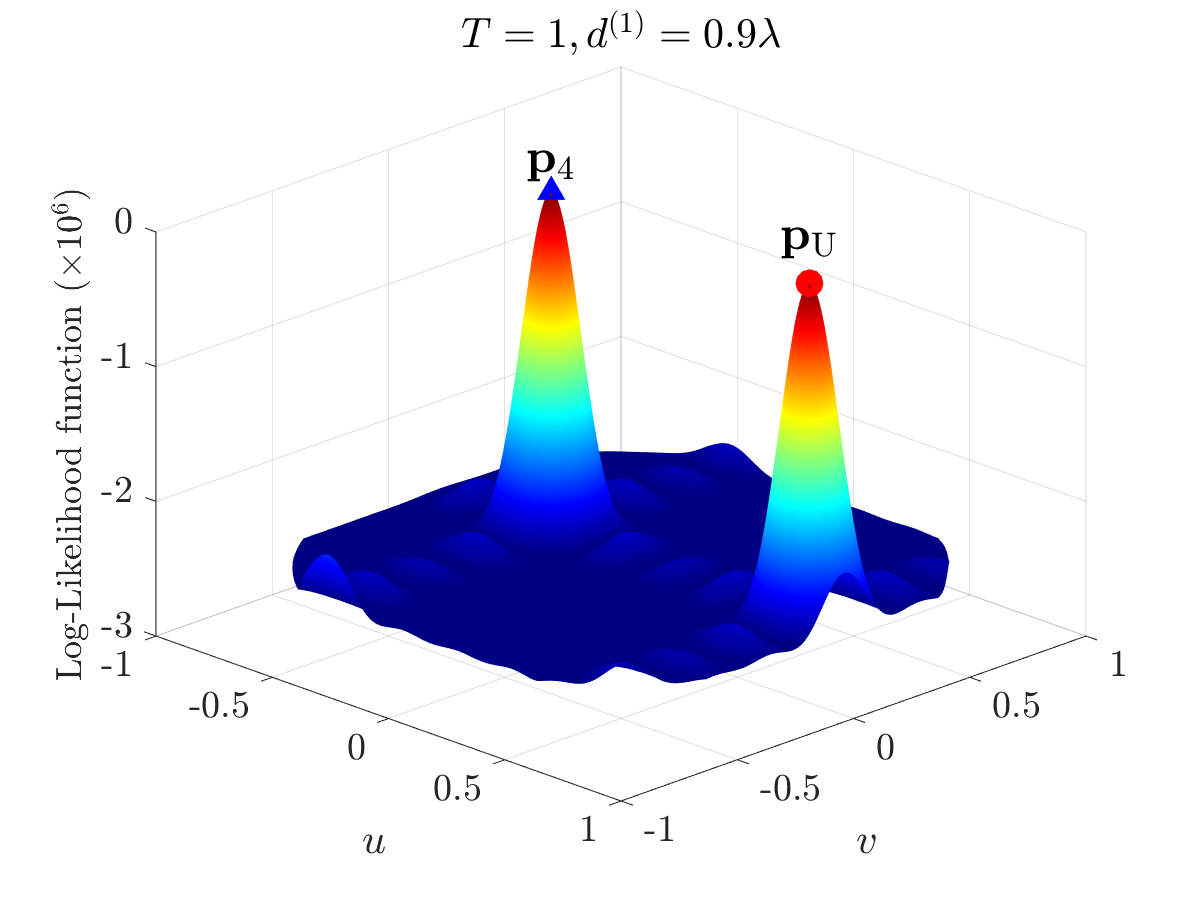}
\end{minipage}
}
\caption{Illustrations of distributions in \eqref{eq:loglikelihood_function} at SNR = 50 dB with $r=r_{\mathrm{U}}=8.168$ m, $T=1$ and $u,v$ are the cosine values of the angles along the $x$-axis and $y$-axis. The red point denotes the user’s position at $\mathbf{p}_{\mathrm{U}}=[5.856, 0.768, 5.642]^{\mathrm{T}}$ m with $\mathcal{L}(\mathbf{p}_{\mathrm{U}}; d^{(t)})=234.34$. The blue triangles denote the prominent false peaks $\mathbf{p}_i$.
(a) For the case of $d^{(1)}=5\lambda$, the peak coordinates are: $\mathbf{p}_1=[-5.603, -0.866, 5.880]^{\mathrm{T}}$ m with $\mathcal{L}(\mathbf{p}_1; d^{(1)})=-9.771\times10^3$; $\mathbf{p}_2=[4.223, 0.768, 6.949]^{\mathrm{T}}$ m with $\mathcal{L}(\mathbf{p}_2;d^{(1)})=-9.905\times10^4$; $\mathbf{p}_3=[-3.970, -0.866, 7.086]^{\mathrm{T}}$ m with $\mathcal{L}(\mathbf{p}_3;d^{(1)})=-1.275\times10^5$. 
(b) For the case of $d^{(1)}=0.9\lambda$, the peak coordinate is $\mathbf{p}_4=[-3.218, 0.768, 7.468]^{\mathrm{T}}$ m with $\mathcal{L}(\mathbf{p}_4;d^{(1)})=-87.23$.}
\label{fig:likelihood_comparison_25dB_single}
\end{figure*}

\subsection{Near-Field False Peak Analysis}
The peaks in a multi-measurement system stem from the combination of the false peaks of each individual measurement. As such, we begin with the analysis of the single-measurement case.

The positions of false peaks are mainly determined by the array geometry. Thus, we analyze them under the high-SNR assumption.
For a single measurement with antenna spacing $d^{(t)}$, we consider the ML estimation problem by maximizing \eqref{eq:likelihood_single_measurement}.
Since each element of $\mathbf{A}(\mathbf{p}, d^{(t)})$ has unit magnitude, the denominator is a constant. Substituting the signal model \eqref{eq:signal_t} at high SNR, maximizing \eqref{eq:likelihood_single_measurement} is equivalent to solving
\begin{align}
\max_{\mathbf{p}} \quad \left|\langle \mathbf{A}(\mathbf{p}, d^{(t)}), \mathbf{A}(\mathbf{p}_{\mathrm{U}}, d^{(t)}) \rangle\right|
\label{eq:grating_lobe_optimization}
\end{align}

We present the following proposition to characterize the solutions for problem \eqref{eq:grating_lobe_optimization}:
\begin{proposition}
For any single measurement $t$ with antenna spacing $d^{(t)}$, a location $\mathbf{p}$ is a solution to problem \eqref{eq:grating_lobe_optimization} if and only if the following conditions are satisfied:
\begin{subequations}
\label{eq:conditions}
\begin{align}
\frac{2d^{(t)}}{\lambda}(u_{\mathrm{U}} - u) &= k_1 , \quad k_1 \in \mathbb{Z} ,\label{eq:condition1} \\
\frac{2d^{(t)}}{\lambda}(v_{\mathrm{U}} - v) &= k_2, \quad k_2 \in \mathbb{Z} ,\label{eq:condition2} \\
\frac{(d^{(t)})^2}{\lambda}\left(\frac{1-u_{\mathrm{U}}^2}{r_{\mathrm{U}}} -\frac{1-u^2}{r}\right)  &= k_3 , \quad k_3 \in \mathbb{Z}, \label{eq:condition3} \\
\frac{(d^{(t)})^2}{\lambda}\left(\frac{1-v_{\mathrm{U}}^2}{r_{\mathrm{U}}} - \frac{1-v^2}{r}\right) &= k_4, \quad k_4 \in \mathbb{Z}, \label{eq:condition4} \\
\frac{(d^{(t)})^2}{\lambda}\left(\frac{u_{\mathrm{U}} v_{\mathrm{U}}}{r_{\mathrm{U}}} - \frac{u v}{r}\right) &= k_5, \quad k_5 \in \mathbb{Z}. \label{eq:condition5}
\end{align}
\end{subequations}
\end{proposition}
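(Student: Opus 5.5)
We write $\langle \mathbf{A}(\mathbf{p},d^{(t)}),\mathbf{A}(\mathbf{p}_{\mathrm{U}},d^{(t)})\rangle=\sum_{i,j}e^{\mathrm{j}\frac{2\pi}{\lambda}(r_{i,j}(\mathbf{p})-r_{i,j}(\mathbf{p}_{\mathrm{U}}))}$ and bound it by the triangle inequality: $|\langle\cdot,\cdot\rangle|\le N_{\mathrm{B}}$. Equality holds if and only if all summands share a common phase. Since every term has unit modulus, the attainable maximum of \eqref{eq:grating_lobe_optimization} is exactly $N_{\mathrm{B}}$; it is achieved at $\mathbf{p}=\mathbf{p}_{\mathrm{U}}$. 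Hence $\mathbf{p}$ solves the problem if and only if the phase difference
\[
\psi_{i,j}\triangleq\frac{2\pi}{\lambda}\big(r_{i,j}(\mathbf{p})-r_{i,j}(\mathbf{p}_{\mathrm{U}})\big)
\]
is congruent modulo $2\pi$ to a constant independent of $(i,j)$, for all $i\in\mathcal{I}_{N_{\mathrm{B},x}}$ and $j\in\mathcal{I}_{N_{\mathrm{B},y}}$. This reduces the optimization claim to a lattice/phase-matching characterization.

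Next, we substitute the second-order expansion \eqref{eq:dis} for both $\mathbf{p}$ and $\mathbf{p}_{\mathrm{U}}$, which the paper adopts as the working model. Then $\psi_{i,j}$ becomes a quadratic polynomial in $(i,j)$:
\[
\psi_{i,j}=c_0+\pi\big(a_1 i+a_2 j+a_3 i^2+a_4 j^2-2a_5 ij\big)\quad(\text{up to sign conventions}),
\]
where $a_1=\frac{2d}{\lambda}(u_{\mathrm{U}}-u)$, $a_2=\frac{2d}{\lambda}(v_{\mathrm{U}}-v)$, and $a_3,a_4,a_5$ are the left-hand sides of \eqref{eq:condition3}--\eqref{eq:condition5}. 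The coefficient factors of $\tfrac12$ in \eqref{eq:dis} combine with $2\pi/\lambda$ to give $\pi$ times the quantities in the statement. The cross-term needs care with the factor: $\frac{2\pi}{\lambda}\frac{uv}{r}d^2 ij=2\pi a_5 ij$. So the condition becomes that $\pi(a_1 i+a_2 j+a_3 i^2+a_4 j^2)+2\pi a_5 ij$ is constant mod $2\pi$.

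\emph{Sufficiency.} If all $a_k$ are integers, use $i^2\equiv i \pmod 2$, so that $\pi(a_1 i+a_3 i^2)\equiv \pi(a_1+a_3)i$, and likewise in $j$. The remaining terms are then integer multiples of $\pi$ times $i$ and $j$. This is where I expect the main obstacle: with only integer $k_1,k_3$, the phase could alternate by $\pi$ between even and odd $i$. So the statement's conditions are sufficient only after checking parity, or under the paper's implicit convention that the integers absorb the factor $2$. I would handle this by arguing sufficiency carefully, possibly noting that $k_1+k_3$ even is needed, or by interpreting the conditions jointly as in the paper.

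\emph{Necessity.} Take finite differences of the constant-phase condition in the index domain. The second differences $\psi_{i+1,j}-2\psi_{i,j}+\psi_{i-1,j}=2\pi a_3$ force $a_3\in\mathbb{Z}$. The mixed difference forces $2\pi a_5\in 2\pi\mathbb{Z}$. Similarly $a_4\in\mathbb{Z}$. The first differences then force $\pi(a_1+a_3(2i+1))\in2\pi\mathbb{Z}$, which gives $a_1\in\mathbb{Z}$, and likewise $a_2\in\mathbb{Z}$. This step requires $N_{\mathrm{B},x},N_{\mathrm{B},y}\ge3$ so that the second differences exist. Finally, I would remark that the exact (non-approximated) distance is replaced by \eqref{eq:dis}, so the statement is understood within the Fresnel approximation.
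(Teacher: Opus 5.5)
Your route matches the paper's. Both substitute \eqref{eq:dis}, obtain a quadratic phase polynomial in $(i,j)$, and extract integrality of its coefficients; the paper's evaluation at $(0,0),(\pm1,0),(0,\pm1),(1,1)$ is exactly your first, second and mixed differences. The substantive issue is the one you flagged at sufficiency, and you should not leave it as a hedge. The ``if'' direction of the proposition is false as written, so no argument can prove it.

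\emph{Counterexample.} Take $v_{\mathrm{U}}=v=0$, $r=r_{\mathrm{U}}$, $u_{\mathrm{U}}=\lambda/(4d^{(t)})$ (valid whenever $d^{(t)}\ge\lambda/4$) and $u=-u_{\mathrm{U}}$. Then \eqref{eq:conditions} holds with $k_1=1$ and $k_2=k_3=k_4=k_5=0$. Yet under \eqref{eq:dis} the phase difference is $\psi_{i,j}=\pi i$, so $|\langle\mathbf{A}(\mathbf{p},d^{(t)}),\mathbf{A}(\mathbf{p}_{\mathrm{U}},d^{(t)})\rangle|=N_{\mathrm{B},y}\bigl|\sum_{i\in\mathcal{I}_{N_{\mathrm{B},x}}}(-1)^i\bigr|\le N_{\mathrm{B},y}<N_{\mathrm{B}}$ for any $N_{\mathrm{B},x}\ge 2$. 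This $\mathbf{p}$ is a near-null, not a solution.

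\emph{The paper has the same hole.} Its sufficiency paragraph asserts that $\frac{\lambda}{2}(k_1 i+k_3 i^2)$ is a multiple of $\lambda$ ``as proven earlier''. That earlier argument used $k_3-k_1=2(k_{-1,0}-k_{0,0})$, which was derived assuming $\mathbf{p}$ already is a solution and does not follow from \eqref{eq:conditions}. So ``interpreting the conditions jointly as in the paper'' is circular. Forcing $k_1,k_3$ to be even (your ``absorbed factor $2$'' reading) does not help either. It would restore sufficiency but break necessity, because points with $k_1,k_3$ both odd (and the other conditions) are also exact phase matches in the quadratic model, and such points are realizable.

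\emph{What you can prove.} For $N_{\mathrm{B},x},N_{\mathrm{B},y}\ge 3$: $\mathbf{p}$ solves \eqref{eq:grating_lobe_optimization} if and only if \eqref{eq:conditions} holds with $k_1\equiv k_3$ and $k_2\equiv k_4 \pmod 2$. Your argument already proves this once you read the first-difference step correctly. After $a_3,a_5\in\mathbb{Z}$, the requirement $\pi(a_1+a_3(2i+1))\in 2\pi\mathbb{Z}$ gives $a_1+a_3\in 2\mathbb{Z}$. That is strictly more than $a_1\in\mathbb{Z}$ and is exactly the missing parity condition. The converse is then your $i^2\equiv i \pmod 2$ computation.

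Two parts of your proposal are sounder than the paper. First, the triangle-inequality reduction is the right justification: the maximum value is $N_{\mathrm{B}}$, attained at $\mathbf{p}_{\mathrm{U}}$, so solutions are exactly the constant-phase points. The paper instead claims phases must align ``at a local maximum'', which is false for local maxima in general. Second, your requirement $N_{\mathrm{B},x},N_{\mathrm{B},y}\ge 3$ is a genuine hypothesis that the paper uses silently when it evaluates at indices $\pm1$. For $N_{\mathrm{B},x}=2$, only the sum of the left-hand sides of \eqref{eq:condition1} and \eqref{eq:condition3} is forced to be an even integer, so necessity fails there as well.
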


\begin{proof}
See Appendix~\ref{app:(14)}.
\end{proof}

We say that a false peak $\mathbf{p} \neq \mathbf{p}_{\mathrm{U}}$ of problem \eqref{eq:grating_lobe_optimization} is a grating lobe if $\mathbf{p}$ satisfies \eqref{eq:conditions}. 
Yet, no such a grating lobe exists in problem \eqref{eq:grating_lobe_optimization}, 
there is generally no exact solution of \eqref{eq:conditions} other than $\mathbf{p} = \mathbf{p}_{\mathrm{U}}$. 
However, the locations that approximately satisfy \eqref{eq:conditions} correspond to the primary false peaks of problem \eqref{eq:grating_lobe_optimization}.

To illustrate this, we examine the specific numerical results from Fig. \ref{fig:likelihood_comparison_25dB_single}, focusing on the top false peaks $\mathbf{p}_1, \dots, \mathbf{p}_4$. We substitute the parameters of each peak into the left-hand side of \eqref{eq:conditions} to obtain the continues value of each $k_i$, and use  $|k_i - \text{round}(k_i)|$ to quantify the deviation from the nearest integer.
For the highest false peak $\mathbf{p}_1$, the deviations are small across all constraints ($|k_i - \text{round}(k_i)| \leq 0.03$ for $i=1,\dots,5$)\footnote{There are some other factors contributing to the deviations. Specifically, the derivation of \eqref{eq:conditions} relies on the Fresnel approximation in \eqref{eq:dis}, whereas the numerical simulation employs the exact Euclidean distance model. Moreover, the presence of noise (e.g., at SNR = 50 dB in Fig. \ref{fig:likelihood_comparison_25dB_single}) introduces random perturbations to the peak positions compared to the noiseless theoretical model.}, indicating an approximate satisfaction of \eqref{eq:conditions}. A similar approximation of \eqref{eq:conditions} is observed for $\mathbf{p}_4$, where $|k_i - \text{round}(k_i)| < 0.005$ for $i=1,\dots,5$.
The peaks $\mathbf{p}_2$ and $\mathbf{p}_3$  exhibit relatively large deviations for \eqref{eq:condition3} with $|k_3 - \text{round}(k_3)| \approx 0.038$ and $0.043$, respectively.
This leads to their peak amplitudes being lower than that of $\mathbf{p}_1$.

\subsection{MSE Analysis}
In a multi-measurement system with $T > 1$, the local maxima of $\mathcal{L}(\mathbf{p}; \mathbf{d})$ in \eqref{eq:loglikelihood_function} are formed by the combination of false peaks from each single measurement. 
When ML estimator locates near the true peak, the MSE is lower-bounded by the CRB \cite{ Kay1993Estimation} as
\begin{equation}
\text{MSE}_{\mathrm{ML}}(\mathbf{p}_{\mathrm{U}}, \mathbf{d}) \geq \text{CRB}(\mathbf{p}_{\mathrm{U}}, \mathbf{d}).
\label{eq:crb_inequality}
\end{equation}
It is known that CRB is a loose lower bound especially when the likelihood in \eqref{eq:max_loglikelihood_function} is multi-modal.
To obtain a tighter lower bound, we consider the extension of \eqref{eq:grating_lobe_optimization} to the case of multiple measurements at high SNR, expressed as
\begin{equation}
     f(\mathbf{p};\mathbf{d}) = \sum_{t=1}^{T} \left|\left\langle \mathbf{A}(\mathbf{p}, d^{(t)}), \mathbf{A}(\mathbf{p}_{\mathrm{U}}, d^{(t)})\right\rangle\right|^2.
    \label{eq:multi_cor}
\end{equation}
Let $\mathcal{M}$ denote the set of all local maxima of \eqref{eq:multi_cor}, excluding the global maximum at $\mathbf{p}_{\mathrm{U}}$. The remaining local maxima are sorted in descending order of their function values of \eqref{eq:multi_cor}. Specifically, the primary false peak $\mathbf{p}_{\mathrm{F},1}$, the secondary false peak $\mathbf{p}_{\mathrm{F},2}$, and so on, are defined as
\begin{subequations}
\begin{align}
\mathbf{p}_{\mathrm{F},1} &= \arg\max_{\mathbf{p} \in \mathcal{M}}  f(\mathbf{p};\mathbf{d}),\\
\mathbf{p}_{\mathrm{F},k} &= \arg\max_{\mathbf{p} \in \mathcal{M} \setminus \{\mathbf{p}_{\mathrm{F},1}, \ldots, \mathbf{p}_{\mathrm{F},k-1}\}}  f(\mathbf{p};\mathbf{d}), \quad k = 2, 3, \ldots.
\end{align}
\label{eq:false_peaks}
\end{subequations}
We define an event of false detection as
\begin{align}
\mathcal{E}_{\mathrm{F},1} &\triangleq \left\{\mathcal{L}(\mathbf{p}_{\mathrm{U}}; \mathbf{d}) \leq \mathcal{L}(\mathbf{p}_{\mathrm{F},1}; \mathbf{d})\right\},
\end{align}
and the corresponding false detection probability is denoted by $ \Pr(\mathcal{E}_{\mathrm{F},1}\,|\,\mathbf{p}_{\mathrm{U}},\mathbf{d})$.\footnote{False detection involves multiple cases where the log-likelihood values of false peaks exceed that of the true peak. For simplicity, we first focus on the primary false peak and use ``false detection probability" to denote $ \Pr(\mathcal{E}_{\mathrm{F},1}\,|\,\mathbf{p}_{\mathrm{U}},\mathbf{d})$.}
By the law of total expectation, the MSE can be decomposed as
\begin{equation}
\begin{split}
\text{MSE}_{\text{ML}}(\mathbf{p}_{\mathrm{U}}, \mathbf{d}) &\triangleq \mathbb{E}\left[\|\hat{\mathbf{p}}_{\mathrm{ML}} - \mathbf{p}_{\mathrm{U}}\|^2 \mid  \mathbf{p}_{\mathrm{U}}, \mathbf{d}\right]\\
&= \text{MSE}_{1}(\mathbf{p}_{\mathrm{U}}, \mathbf{d}) \left(1 - \Pr(\mathcal{E}_{\mathrm{F},1}\mid\mathbf{p}_{\mathrm{U}},\mathbf{d})\right)  \\
&\quad +  \text{MSE}_{2}(\mathbf{p}_{\mathrm{U}}, \mathbf{d})  \Pr(\mathcal{E}_{\mathrm{F},1}\mid\mathbf{p}_{\mathrm{U}},\mathbf{d}),
\label{eq:mse_decomposition}
\end{split}
\end{equation}
where $\hat{\mathbf{p}}_{\mathrm{ML}}$ is the ML estimator of \eqref{eq:max_loglikelihood_function} and 
\begin{subequations}
\begin{align}
    \text{MSE}_{1}(\mathbf{p}_{\mathrm{U}}, \mathbf{d}) \triangleq \mathbb{E}\left[\|\hat{\mathbf{p}}_{\mathrm{ML}} - \mathbf{p}_{\mathrm{U}}\|^2 \mid \overline{\mathcal{E}_{\mathrm{F},1}}, \mathbf{p}_{\mathrm{U}}, \mathbf{d}\right], \\
    \text{MSE}_{2}(\mathbf{p}_{\mathrm{U}}, \mathbf{d}) \triangleq \mathbb{E}\left[\|\hat{\mathbf{p}}_{\mathrm{ML}} - \mathbf{p}_{\mathrm{U}}\|^2 \mid \mathcal{E}_{\mathrm{F},1}, \mathbf{p}_{\mathrm{U}}, \mathbf{d}\right]. \label{eq:MSE2_definition}
\end{align}
\label{eq:MSE_definitions}%
\end{subequations}
Here, $\text{MSE}_{1}(\mathbf{p}_{\mathrm{U}}, \mathbf{d})$ corresponds to the expected squared estimation error when the ML estimator occurs at the true peak, while $\text{MSE}_{2}(\mathbf{p}_{\mathrm{U}}, \mathbf{d})$ represents the expected squared estimation error when the ML estimator mistakenly selects the primary false peak $\mathbf{p}_{\mathrm{F},1}$.

Substituting \eqref{eq:crb_inequality} and \eqref{eq:MSE_definitions} into \eqref{eq:mse_decomposition}, the MSE of the ML estimator is lower-bounded by
the $\text{MSE}_\text{L}$ as
\begin{align}
   \text{MSE}_{\text{ML}}(\mathbf{p}_{\mathrm{U}}, \mathbf{d}) &\geq \text{MSE}_{\text{L}}(\mathbf{p}_{\mathrm{U}}, \mathbf{d}),
\label{eq:tight_rela} 
\end{align}
and 
\begin{equation}
\begin{split}
\text{MSE}_{\text{L}}(\mathbf{p}_{\mathrm{U}}, \mathbf{d})
   &= \left(1 - \Pr(\mathcal{E}_{\mathrm{F},1}\mid\mathbf{p}_{\mathrm{U}},\mathbf{d})\right)   \text{CRB}(\mathbf{p}_{\mathrm{U}}, \mathbf{d}) \\
   &+ \Pr(\mathcal{E}_{\mathrm{F},1}\mid\mathbf{p}_{\mathrm{U}},\mathbf{d})  \text{MSE}_{2}(\mathbf{p}_{\mathrm{U}}, \mathbf{d}),
\label{eq:weighted_mse} 
\end{split}
\end{equation}
where $\text{MSE}_{\text{L}}(\mathbf{p}_{\mathrm{U}}, \mathbf{d})$ is a tighter lower bound than CRB and the false detection probability $\Pr(\mathcal{E}_{\mathrm{F},1}\mid\mathbf{p}_{\mathrm{U}},\mathbf{d})$ is calculated by the following theorems. 
For notational simplicity, we let $\mathbf{A}_{\mathrm{U}}^{(t)} = \mathbf{A}(\mathbf{p}_{\mathrm{U}}, d^{(t)})$ and $\mathbf{A}_{\mathrm{F},1}^{(t)} = \mathbf{A}(\mathbf{p}_{\mathrm{F},1}, d^{(t)})$.

\begin{theorem}
The false detection probability $\Pr(\mathcal{E}_{\mathrm{F},1}\mid\mathbf{p}_{\mathrm{U}},\mathbf{d})$ is calculated as
\begin{equation}
\Pr(\mathcal{E}_{\mathrm{F},1}\mid\mathbf{p}_{\mathrm{U}},\mathbf{d}) = \frac{1}{2} + \frac{1}{\pi} \int_0^{\infty} \frac{\mathrm{Im}[\phi_{z}(u;\mathbf{p}_{\mathrm{U}},\mathbf{d})]}{u} du,
\label{eq:integral}
\end{equation}
where the characteristic function $\phi_{z}(u;\mathbf{p}_{\mathrm{U}},\mathbf{d})$ is given by
\begin{equation}
\phi_{z}(u;\mathbf{p}_{\mathrm{U}},\mathbf{d}) = \prod_{t=1}^{T} \frac{\exp\left(\mathrm{j}u (\boldsymbol{\mu}^{(t)})^{\mathrm{H}} \mathbf{J}(\mathbf{I} - \mathrm{j}u \boldsymbol{\Sigma}^{(t)} \mathbf{J})^{-1} \boldsymbol{\mu}^{(t)}\right)}{\det(\mathbf{I} - \mathrm{j}u \boldsymbol{\Sigma}^{(t)} \mathbf{J})}.
\end{equation}
Here, $\mathrm{j} = \sqrt{-1}$, $\mathbf{J} = \mathrm{diag}(1, -1)$, $z = \sum_{t=1}^{T} z^{(t)}$ and $z^{(t)}= \left|\frac{\langle\mathbf{A}_{\mathrm{F},1}^{(t)} ,\mathbf{Y}^{(t)}\rangle}{\sigma \|\mathbf{A}_{\mathrm{F},1}^{(t)}\|_F}\right|^2 - \left|\frac{\langle\mathbf{A}_{\mathrm{U}}^{(t)} ,\mathbf{Y}^{(t)}\rangle}{\sigma \|\mathbf{A}_{\mathrm{U}}^{(t)}\|_F}\right|^2$. $\boldsymbol{\mu}^{(t)} = \frac{\beta^{(t)}}{\sigma} \|\mathbf{A}_{\mathrm{U}}^{(t)}\|_F 
\begin{bmatrix} 
\rho^{(t)} \\ 1 
\end{bmatrix}$ and $\boldsymbol{\Sigma}^{(t)} =
\begin{bmatrix}
1 & \rho^{(t)} \\
(\rho^{(t)})^* & 1
\end{bmatrix}$ 
with $\rho^{(t)} = \frac{\langle\mathbf{A}_{\mathrm{F},1}^{(t)} ,\mathbf{A}_{\mathrm{U}}^{(t)}\rangle}{\|\mathbf{A}_{\mathrm{F},1}^{(t)}\|_F \|\mathbf{A}_{\mathrm{U}}^{(t)}\|_F}$.
\end{theorem}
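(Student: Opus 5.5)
The plan has three parts: rewrite the event $\mathcal{E}_{\mathrm{F},1}$ in terms of $z$, identify $z$ as a sum of independent indefinite Hermitian quadratic forms in complex Gaussian vectors, and apply the Gil-Pelaez inversion formula.

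\textbf{Step 1: reduce the event to $\{z \geq 0\}$.} By \eqref{eq:loglikelihood_function}, the additive constant $c$ cancels in $\mathcal{L}(\mathbf{p}_{\mathrm{F},1};\mathbf{d})-\mathcal{L}(\mathbf{p}_{\mathrm{U}};\mathbf{d})$. What remains is
\[
\mathcal{L}(\mathbf{p}_{\mathrm{F},1};\mathbf{d})-\mathcal{L}(\mathbf{p}_{\mathrm{U}};\mathbf{d})=\sum_{t=1}^T z^{(t)} = z .
\]
Hence $\Pr(\mathcal{E}_{\mathrm{F},1}\mid\mathbf{p}_{\mathrm{U}},\mathbf{d})=\Pr(z\geq 0)=1-F_z(0)$. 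Since $z$ has a continuous distribution, the event $z=0$ has probability zero.

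\textbf{Step 2: characteristic function of each $z^{(t)}$.} For each $t$, define the normalized correlator outputs
\[
w_1^{(t)}=\frac{\langle\mathbf{A}_{\mathrm{F},1}^{(t)},\mathbf{Y}^{(t)}\rangle}{\sigma\|\mathbf{A}_{\mathrm{F},1}^{(t)}\|_F},\qquad
w_2^{(t)}=\frac{\langle\mathbf{A}_{\mathrm{U}}^{(t)},\mathbf{Y}^{(t)}\rangle}{\sigma\|\mathbf{A}_{\mathrm{U}}^{(t)}\|_F},
\]
and stack them as $\mathbf{w}^{(t)}=[w_1^{(t)},w_2^{(t)}]^{\mathrm{T}}$. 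Substituting \eqref{eq:signal_t}:
\begin{itemize}
\item $\mathbf{w}^{(t)}$ is complex Gaussian.
\item Its mean is $\frac{\beta^{(t)}}{\sigma}\|\mathbf{A}_{\mathrm{U}}^{(t)}\|_F[\rho^{(t)},1]^{\mathrm{T}}=\boldsymbol{\mu}^{(t)}$, using the stated definition of $\rho^{(t)}$.
\item It is circularly symmetric with covariance $\boldsymbol{\Sigma}^{(t)}$, because $\mathbb{E}[\langle\mathbf{A},\mathbf{N}\rangle\langle\mathbf{B},\mathbf{N}\rangle^*]=\sigma^2\langle\mathbf{A},\mathbf{B}\rangle$ up to the conjugation convention of $\langle\cdot,\cdot\rangle$, which I will fix so the off-diagonal entry is $\rho^{(t)}$.
\end{itemize}
Then $z^{(t)}=(\mathbf{w}^{(t)})^{\mathrm{H}}\mathbf{J}\mathbf{w}^{(t)}$, a Hermitian quadratic form in a non-central complex Gaussian vector.

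Apply the standard formula (Turin's result): for $\mathbf{w}\sim\mathcal{CN}(\boldsymbol{\mu},\boldsymbol{\Sigma})$ and Hermitian $\mathbf{Q}$,
\[
\mathbb{E}\big[e^{\mathrm{j}u\,\mathbf{w}^{\mathrm{H}}\mathbf{Q}\mathbf{w}}\big]=\frac{\exp\!\big(\mathrm{j}u\,\boldsymbol{\mu}^{\mathrm{H}}\mathbf{Q}(\mathbf{I}-\mathrm{j}u\boldsymbol{\Sigma}\mathbf{Q})^{-1}\boldsymbol{\mu}\big)}{\det(\mathbf{I}-\mathrm{j}u\boldsymbol{\Sigma}\mathbf{Q})}.
\]
I would verify this rather than just cite it:
\begin{itemize}
\item If $\boldsymbol{\Sigma}$ is singular (e.g.\ $|\rho^{(t)}|=1$), treat that case by continuity.
\item Otherwise write $\mathbf{w}=\boldsymbol{\mu}+\boldsymbol{\Sigma}^{1/2}\mathbf{g}$ with $\mathbf{g}\sim\mathcal{CN}(\mathbf{0},\mathbf{I})$, complete the square in the Gaussian integral, and use $\det(\mathbf{I}-\mathrm{j}u\boldsymbol{\Sigma}^{1/2}\mathbf{Q}\boldsymbol{\Sigma}^{1/2})=\det(\mathbf{I}-\mathrm{j}u\boldsymbol{\Sigma}\mathbf{Q})$.
\item Use a push-through identity to move from $\boldsymbol{\Sigma}^{1/2}$-sandwiched form to the stated $\mathbf{Q}(\mathbf{I}-\mathrm{j}u\boldsymbol{\Sigma}\mathbf{Q})^{-1}$ form.
\end{itemize}
Setting $\mathbf{Q}=\mathbf{J}$ gives each factor in the theorem.

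\textbf{Step 3: combine measurements and invert.} The noise matrices $\mathbf{N}^{(t)}$ are independent across $t$, so the $z^{(t)}$ are independent. Therefore $\phi_z$ is the product of the per-measurement characteristic functions, which is the stated product. Finally apply the Gil-Pelaez inversion
\[
F_z(0)=\frac12-\frac1\pi\int_0^\infty\frac{\mathrm{Im}[\phi_z(u)]}{u}\,du .
\]
Substituting into $\Pr(z\ge0)=1-F_z(0)$ yields \eqref{eq:integral}.

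\textbf{Main obstacle.} I expect the hardest part to be the careful verification of the quadratic-form characteristic function. This includes getting the conjugation in $\rho^{(t)}$ versus $(\rho^{(t)})^*$ consistent with the inner-product convention. It also includes justifying convergence of the Gil-Pelaez integral near $u=0$. That convergence holds because $\phi_z$ is differentiable at $0$: $z$ has finite mean, since it is a quadratic form in Gaussians.
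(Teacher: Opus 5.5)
Your proposal is correct and follows the paper's proof essentially step for step. The paper likewise writes $z^{(t)}=(\boldsymbol{x}^{(t)})^{\mathrm{H}}\mathbf{J}\boldsymbol{x}^{(t)}$ with $\boldsymbol{x}^{(t)}\sim\mathcal{CN}(\boldsymbol{\mu}^{(t)},\boldsymbol{\Sigma}^{(t)})$, cites the non-central complex quadratic-form characteristic function, takes the product over independent measurements, and applies Gil-Pelaez at $0$. Your additional care only fills in details the paper leaves to citations: you derive Turin's formula via completing the square and the push-through identity, and you note that $\{z=0\}$ is a null event, so $\geq$ and $>$ agree (this holds as long as some $|\rho^{(t)}|<1$).
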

\begin{proof}
See Appendix~\ref{app:pf_derivation}.
\end{proof}

\begin{figure}
    \centering
    \captionsetup{labelsep=period, font=small}
    \includegraphics[width=1\linewidth]{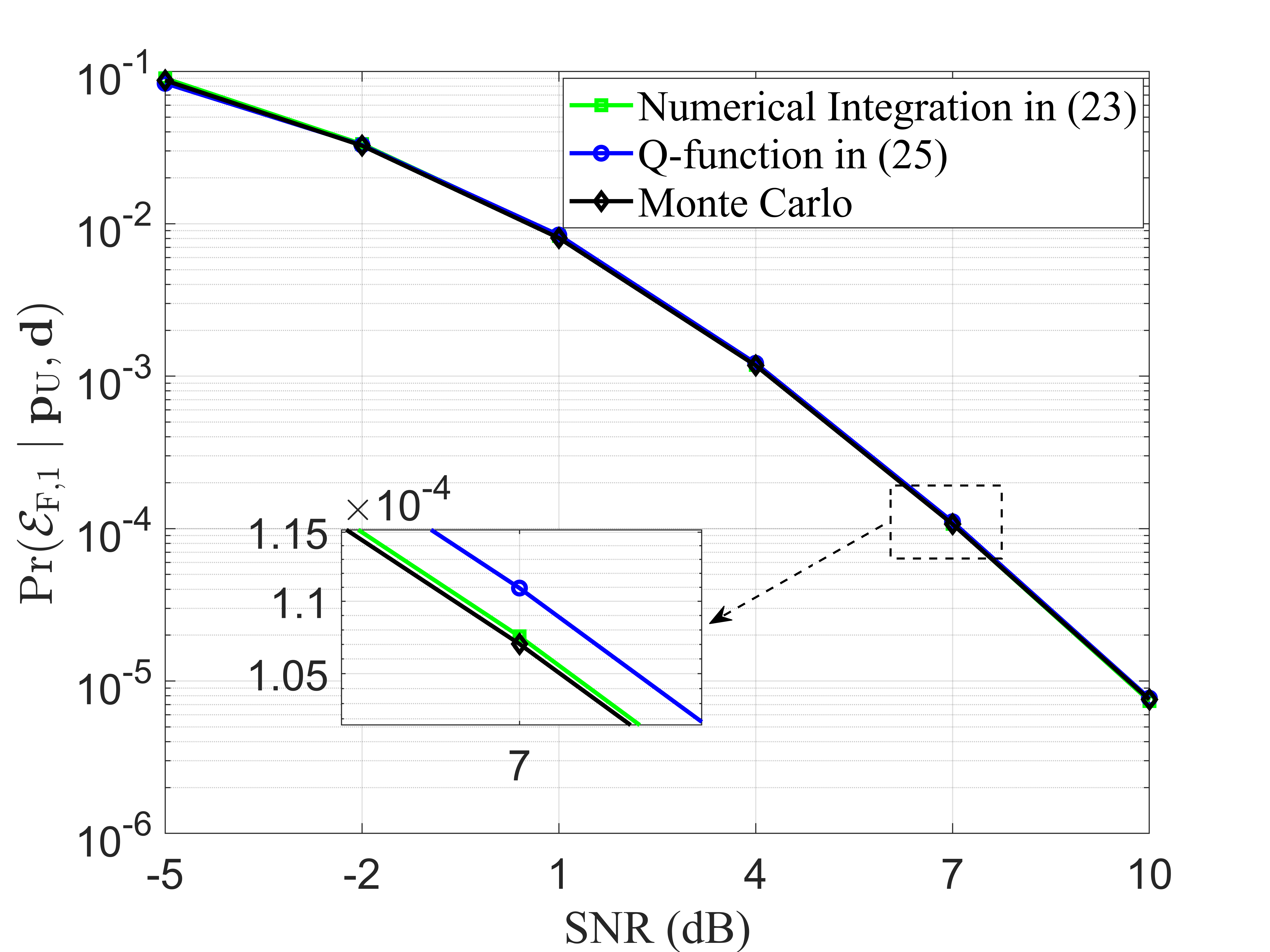}
    \caption{Comparison of $\Pr(\mathcal{E}_{\mathrm{F},1}\mid\mathbf{p}_{\mathrm{U}},\mathbf{d})$ obtained from the numerical integration in \eqref{eq:integral}, the Q-function in \eqref{eq:pf_qfunction_approximation} and the Monte Carlo method. The Monte Carlo result is estimated by 10,000 trials as the fraction of realizations in which the log-likelihood value at the false peak exceeds that at the true peak under complex Gaussian noise. All results are averaged over 50000 random user realizations and optimized spacing configurations.}
\end{figure}

\begin{theorem}
As $\sigma^2 \to 0$, the false detection probability $\Pr(\mathcal{E}_{\mathrm{F},1}\mid\mathbf{p}_{\mathrm{U}},\mathbf{d})$ satisfies the following limit:
\begin{equation}
\lim_{\sigma^2 \to 0} \frac{\Pr(\mathcal{E}_{\mathrm{F},1}\mid\mathbf{p}_{\mathrm{U}},\mathbf{d})}{Q\left(\sqrt{\frac{S(\mathbf{p}_{\mathrm{U}}, \mathbf{d})}{2\sigma^2}}\right)} = 1,
\label{eq:pf_qfunction_approximation}
\end{equation}
where the $Q$-function is the tail distribution function of the standard normal distribution \cite{Chiani} and 
\begin{align}
S(\mathbf{p}_{\mathrm{U}}, \mathbf{d})= \sum_{t=1}^T |\beta^{(t)}|^2 \left( \|\mathbf{A}_{\mathrm{U}}^{(t)}\|_F^2 - \frac{| \langle \mathbf{A}^{(t)}_{\mathrm{F},1}, \mathbf{A}^{(t)}_{\mathrm{U}} \rangle |^2}{\|\mathbf{A}_{\mathrm{F},1}^{(t)}\|_F^2} \right). \label{eq:signal_term_T}
\end{align}
\end{theorem}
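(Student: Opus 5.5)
Write $N^{(t)}_{\mathrm{F}}=\|\mathbf{A}_{\mathrm{F},1}^{(t)}\|_F$ and $N^{(t)}_{\mathrm{U}}=\|\mathbf{A}_{\mathrm{U}}^{(t)}\|_F$. The plan is a high-SNR linearization of the decision statistic $z=\sum_t z^{(t)}$ from the preceding theorem, so that $\Pr(\mathcal{E}_{\mathrm{F},1}\mid\mathbf{p}_{\mathrm{U}},\mathbf{d})=\Pr(z\ge 0)$ becomes the tail of a Gaussian, followed by a check that the quadratic remainder does not change the leading asymptotics.

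\textbf{Decomposition.} Set $a^{(t)}=\langle\mathbf{A}_{\mathrm{U}}^{(t)},\mathbf{Y}^{(t)}\rangle/(\sigma N^{(t)}_{\mathrm{U}})$ and $b^{(t)}=\langle\mathbf{A}_{\mathrm{F},1}^{(t)},\mathbf{Y}^{(t)}\rangle/(\sigma N^{(t)}_{\mathrm{F}})$, so that $z^{(t)}=|b^{(t)}|^2-|a^{(t)}|^2$. As in the preceding theorem, $(b^{(t)},a^{(t)})$ is complex Gaussian with mean $\boldsymbol{\mu}^{(t)}$ (with $\rho^{(t)}$ as defined there), covariance $\boldsymbol{\Sigma}^{(t)}$, and independence across $t$.

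Decompose $b^{(t)}=\rho^{(t)}a^{(t)}+e^{(t)}$, where $e^{(t)}$ is independent of $a^{(t)}$ with $e^{(t)}\sim\mathcal{CN}(0,1-|\rho^{(t)}|^2)$. Write $a^{(t)}=m^{(t)}+w^{(t)}$ with $m^{(t)}=\beta^{(t)}N^{(t)}_{\mathrm{U}}/\sigma$ and $w^{(t)}\sim\mathcal{CN}(0,1)$. Expanding gives
\begin{equation*}
z^{(t)}=-(1-|\rho^{(t)}|^2)|m^{(t)}|^2+2\,\mathrm{Re}\{\xi^{(t)}\}+q^{(t)},
\end{equation*}
where $\xi^{(t)}$ is linear in $(w^{(t)},e^{(t)})$ and $q^{(t)}$ is quadratic and $O(1)$. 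Note that $(1-|\rho^{(t)}|^2)|m^{(t)}|^2$ is exactly the $t$-th summand of $S(\mathbf{p}_{\mathrm{U}},\mathbf{d})/\sigma^2$ in \eqref{eq:signal_term_T}.

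\textbf{Variance of the linear term.} Compute $\mathrm{Var}(2\,\mathrm{Re}\{\xi^{(t)}\})$. The cleanest route is to note that $z\ge0$ is equivalent, to leading order, to the noise projection onto the direction separating the two hypotheses exceeding half the squared distance. Geometrically, $S/\sigma^2$ is the squared norm of the component of the true signal orthogonal to $\mathbf{A}_{\mathrm{F},1}^{(t)}$. The linear term should then be a real Gaussian with variance $2S/\sigma^2$, giving
\begin{equation*}
\Pr(\mathcal{E}_{\mathrm{F},1}\mid\mathbf{p}_{\mathrm{U}},\mathbf{d})\approx Q\!\left(\frac{S/\sigma^2}{\sqrt{2S/\sigma^2}}\right)=Q\!\left(\sqrt{S/(2\sigma^2)}\right),
\end{equation*}
which matches \eqref{eq:pf_qfunction_approximation}. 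I will verify this variance carefully per $t$, using the independence across $t$ to sum.

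\textbf{Ratio limit (main obstacle).} The remaining task is to show that the ratio tends to $1$, not merely that the exponents agree; this is the hardest step, since $q^{(t)}=O(1)$ perturbs a threshold of order $\sigma^{-2}$. My first attempt is a sandwich argument. Condition on $q=\sum_t q^{(t)}$, which is bounded in probability and has exponential tails that are light relative to $S/\sigma^2$. Then use $Q(x+\delta/x)/Q(x)\to e^{-\delta}$, and observe that the shift in the Gaussian threshold is $q/(2\sqrt{2S/\sigma^2})\cdot\sqrt{\cdot}$, which contributes a factor tending to $1$ as $\sigma\to0$. If the dependence between $q$ and the linear term complicates this, I will instead apply a saddle-point (Laplace) evaluation of the Gil-Pelaez integral \eqref{eq:integral} using the exact characteristic function $\phi_z$. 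The saddle point lies near the Gaussian one, and the determinant prefactor tends to $1$, which yields the $Q$-function asymptotics directly.
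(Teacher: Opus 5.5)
Your linearization is the same as the paper's. The linear noise term is a real Gaussian with variance $2S/\sigma^2$, which is the paper's $\sigma_N^2=2\sigma^2 S$. Everything then rests on the ratio step, and the paper handles it only by remarking that $\Delta N$ has variance $O(\sigma^4)$. That step cannot be completed, because the stated limit is false whenever $S>0$.

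The event $\{z\ge 0\}$ is a large deviation. The mean sits at distance $\Theta(\sigma^{-1})$ from the cone $\{z=0\}$, so the noise realizations producing the event have size $\Theta(\sigma^{-1})$. At that scale your $q$ is $\Theta(\sigma^{-2})$, the same order as the threshold. The cleanest instance is $T=1$, $\rho^{(1)}=0$. Then $b=e$ is independent of $a=m+w$, $|b|^2\sim\mathrm{Exp}(1)$, and completing the square gives
\begin{equation*}
\Pr(z\ge 0)=\mathbb{E}\big[e^{-|m+w|^2}\big]=\tfrac12\,e^{-|m|^2/2}.
\end{equation*}
Here $S/\sigma^2=|m|^2$ and $Q(|m|/\sqrt2)\asymp e^{-|m|^2/4}/|m|$, so the ratio tends to $0$. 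In general, apply Chernoff at $s=\tfrac12$ with the exact moment generating function from the preceding theorem. With $c_t=1-|\rho^{(t)}|^2$ this gives
\begin{equation*}
\Pr(z\ge 0)\le\prod_{t=1}^{T}\frac{1}{1-c_t/4}\exp\Big(-\frac{|m^{(t)}|^2\,c_t/4}{1-c_t/4}\Big).
\end{equation*}
Its exponent exceeds $S/(4\sigma^2)=\sum_t |m^{(t)}|^2 c_t/4$ by $\sum_t |m^{(t)}|^2\frac{c_t^2/16}{1-c_t/4}$, a positive multiple of $\sigma^{-2}$. Since $Q(\sqrt{S/(2\sigma^2)})$ decays only like $e^{-S/(4\sigma^2)}$ up to polynomial factors, the ratio tends to $0$, not $1$. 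For $T=1$ the exact exponent is $|m|^2(1-|\rho|)/2=S/(2\sigma^2(1+|\rho|))$. This matches $S/(4\sigma^2)$ only at $|\rho|=1$, where $S=0$.

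Both of your routes break at exactly this point. In the sandwich, \emph{exponential tails light relative to $S/\sigma^2$} is not enough. We have $\Pr(q\ge\theta S/\sigma^2)=e^{-\Theta(S/\sigma^2)}$, the same order as the Gaussian tail, so splitting the threshold between the linear and quadratic parts is cheaper than loading it all on the linear part. Moreover, $q$ is not independent of the linear term. Even your own identity does not give a factor tending to $1$. With $x=\sqrt{S/(2\sigma^2)}$, shifting the threshold by $q$ shifts it by $(q/2)/x$ standard deviations, which produces a factor $e^{q/2}$, and $q/2=O(1)$ does not vanish.

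In the saddle-point route, the saddle of the exact cumulant generating function is at $s^*=1/(1+|\rho|)$ for $T=1$, not near the Gaussian value $\tfrac12$. There $s\boldsymbol{\Sigma}\mathbf{J}$ is $O(1)$, so $(\mathbf{I}-s\boldsymbol{\Sigma}\mathbf{J})^{-1}$ changes the exponent at leading order, not just the prefactor. Carried out honestly, that computation refutes the statement.

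What survives is weaker: $Q(\sqrt{S/(2\sigma^2)})$ is an asymptotic overestimate of the false-detection probability. For $T=1$ its exponent is off by the factor $2/(1+|\rho|)$. That factor is close to $1$ for highly correlated false peaks, but it never yields a ratio limit of $1$.
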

\begin{proof}
See Appendix~\ref{app:pf_derivation_Q}.
\end{proof}

It is worth noting that the true complex amplitude $\beta^{(t)}$ is generally unknown in practice. For the computation of $\Pr(\mathcal{E}_{\mathrm{F},1}\mid\mathbf{p}_{\mathrm{U}},\mathbf{d})$ in Theorem 1 and Theorem 2, we substitute $\beta^{(t)}$ with its least-squares estimate $\hat{\beta}^{(t)}$.

Fig.~4 demonstrates that the Q-function approach achieves high accuracy in computing $\Pr(\mathcal{E}_{\mathrm{F},1}\,|\,\mathbf{p}_{\mathrm{U}},\mathbf{d})$, with negligible deviation from numerical integration and Monte Carlo benchmarks, even in the relatively low SNR regime. Crucially, the proposed Q-function method reduces the average execution time for $\Pr(\mathcal{E}_{\mathrm{F},1}\,|\,\mathbf{p}_{\mathrm{U}},\mathbf{d})$ computation by several orders of magnitude compared with numerical integration.  
Therefore, we adopt the Q-function in subsequent computations of 
$\Pr(\mathcal{E}_{\mathrm{F},1}\,|\,\mathbf{p}_{\mathrm{U}},\mathbf{d})$.

\section{Array Zooming Optimization}
Building upon the theoretical analysis of false peak distribution and the derivation of a tight MSE bound in Section III, we proceed to the design of the array zooming system. In specific, we formulate an optimization problem to determine the optimal antenna spacing configuration.

\subsection{Problem Formulation}
Since the exact location of the user is not known \textit{a priori}, we aim to minimize the maximum $\text{MSE}_\text{L}(\mathbf{p}, \mathbf{d})$ over the user region $\mathcal{R}$.  Based on \eqref{eq:weighted_mse}, the optimization problem is reformulated as
\begin{subequations}
\begin{align}
    &\min_{\mathbf{d}} \max_{\mathbf{p}\in \mathcal{R}} \text{MSE}_{\text{L}}(\mathbf{p}, \mathbf{d}) \\
    &\text{s.t.} \quad \mathbf{d} \in \mathcal{D}^T, 
\end{align}
\label{eq:P2}%
\end{subequations}
where $\mathcal{D}^T = \underbrace{\mathcal{D} \times \mathcal{D} \times \cdots \times \mathcal{D}}_{T \text{ times}}$ denotes the $T$-fold Cartesian product of $\mathcal{D}$. 
Directly solving problem \eqref{eq:P2} is computationally complex, primarily due to the lack of closed-form expressions for the local maxima of the multi-measurement log-likelihood function.
To address this challenge, we identify the primary false peak by searching in regions where high false peaks from different measurements are spatially consistent, meaning that their positions are sufficiently close in the angular domain such that they collectively contribute to a common false peak of the multi-measurement system.

\subsection{Primary False Peak Search}
We adopt a heuristic approach to search for the position of the primary false peak in the multi-measurement system.
We first exploit the near-field false peak analysis in Section III.A to locate the high false peak positions for each single measurement, and then identify regions where these false peaks are spatially consistent. A fine grid is generated in these spatially consistent regions to search for the position of the primary false peak $\mathbf{p}_{\mathrm{F},1}$. 

For a given position $\mathbf{p}\in \mathcal{R}$, the false peak can be identified using the conditions in \eqref{eq:conditions}.
Specifically, for each measurement $t \in \{1, \dots, T\}$, we first generate a candidate set of false peak positions $\mathcal{C}^{(t)} = \{(u^{(t)}_{k_{t,u}}, v^{(t)}_{k_{t,v}})\}_{k_{t,u},k_{t,v}\in\mathbb{Z}}$ based on:
\begin{subequations}
\begin{align}
u^{(t)}_{k_{t,u}} &= u_{\mathrm{U}} + k_{t,u}\frac{\lambda}{2d^{(t)}}, \quad k_{t,u}\in\mathbb{Z}, \label{eq:u_t_candidates}\\
v^{(t)}_{k_{t,v}} &= v_{\mathrm{U}} + k_{t,v}\frac{\lambda}{2d^{(t)}}, \quad k_{t,v}\in\mathbb{Z}, \label{eq:v_t_candidates}
\end{align}
\label{eq:uv_t_candidates}%
\end{subequations}
where each candidate corresponds to a position with $r = r_{\mathrm{U}}$.
We then filter the candidates in each $\mathcal{C}^{(t)}$ by using \eqref{eq:condition3}--\eqref{eq:condition5}. Specifically, for each candidate, we compute the coefficients:
\begin{subequations}
\begin{align}
k_3(u^{(t)}_{k_{t,u}}) &= \frac{d^{(t)2}}{\lambda} \left( \frac{u_{\mathrm{U}}^2-(u^{(t)}_{k_{t,u}})^2}{r_{\mathrm{U}}} \right), 
\label{eq:u_k3_candidates}\\
k_4(v^{(t)}_{k_{t,v}}) &= \frac{d^{(t)2}}{\lambda} \left( \frac{v_{\mathrm{U}}^2-(v^{(t)}_{k_{t,v}})^2}{r_{\mathrm{U}}} \right), 
\label{eq:u_k4_candidates}\\
k_5(u^{(t)}_{k_{t,u}},v^{(t)}_{k_{t,v}}) &= \frac{d^{(t)2}}{\lambda} \left( \frac{u_{\mathrm{U}}v_{\mathrm{U}} - u^{(t)}_{k_{t,u}}v^{(t)}_{k_{t,v}}}{r_{\mathrm{U}}} \right).
\label{eq:u_k5_candidates}
\end{align}
\label{eq:uv_k_candidates}%
\end{subequations}
A candidate is retained in the $\mathcal{C}^{(t)}$ if it satisfies the constraint:
\begin{align}
|k_i - \text{round}(k_i)| \le \epsilon, \quad  i \in \{3,4,5\},
\label{eq:deviation constraint}
\end{align}
where $k_i$ is a succinct notation for the coefficients on the left-hand side of  \eqref{eq:uv_k_candidates} and $\epsilon$ is a predefined tolerance that accounts for the deviation of the peaks from the satisfaction of \eqref{eq:conditions}. 

After executing the above procedure for all $T$ measurements, we obtain $T$ sets $\{\mathcal{C}^{(t)}\}_{t=1}^T$. 
Then, we identify regions where the false peaks from different sets are spatially consistent.
We define a false peak set as a collection of $T$ candidates: $\mathcal{H} = \{(u^{(t)}_{k_{t,u}}, v^{(t)}_{k_{t,v}})\}_{t=1}^T$, where $(u^{(t)}_{k_{t,u}}, v^{(t)}_{k_{t,v}}) \in \mathcal{C}^{(t)}$ for each $t=1,\ldots,T$. When $T \ge 2$, the $T$ candidates in $\mathcal{H}$ is considered spatially consistent only if every pair of candidates satisfies the spatial consistency condition. That is, for all $1 \le s < t \le T$, the spatial distance satisfies
\begin{equation}
\big|u^{(t)}_{k_{t,u}}-u^{(s)}_{k_{s,u}}\big|\le\varepsilon_u(t,s),\quad
\big|v^{(t)}_{k_{t,v}}-v^{(s)}_{k_{s,v}}\big|\le\varepsilon_v(t,s),
\label{eq:cons}
\end{equation}
where the axis-wise tolerances between measurements $t$ and $s$ are defined as
\begin{subequations}
\begin{align}
\varepsilon_u(t,s)&=\tfrac{1}{2}\!\left(\Delta u^{(t)}+\Delta u^{(s)}\right), \label{eq:axis_tol_u}\\
\varepsilon_v(t,s)&=\tfrac{1}{2}\!\left(\Delta v^{(t)}+\Delta v^{(s)}\right). \label{eq:axis_tol_v}
\end{align}
\label{eq:axis_tol}%
\end{subequations}
with 
\begin{equation}
\Delta u^{(t)}=\frac{\kappa\,\lambda}{N_x d^{(t)}} \quad \text{and} \quad
\Delta v^{(t)}=\frac{\kappa\,\lambda}{N_y d^{(t)}},
\label{eq:3db}
\end{equation}
being the main-lobe widths for the $t$-th measurement. $\kappa$ is a flexible beamwidth factor (e.g., $\kappa=0.891$ for 3dB beamwidth \cite{beamwidth}) that can be adjusted to balance search sensitivity and computational cost.

Suppose that there are $W$ false peak sets, denoted as $\mathcal{H}_1, \dots, \mathcal{H}_W$. For the $w$-th set $\mathcal{H}_w$, we generate a corresponding fine grid $\mathcal{G}_w$ with $N_w$  grid points. The geometric centroid of $\mathcal{H}_w$ is calculated by 
\begin{equation}
\bar{u}_w = \frac{1}{T}\sum_{t=1}^{T} u^{(t)}_{w}, \quad
\bar{v}_w = \frac{1}{T}\sum_{t=1}^{T} v^{(t)}_{w},
\label{eq:centroid}
\end{equation}
where $(u_{w}^{(t)}, v_{w}^{(t)})$ is the $t$-th peak in the $w$-th set $\mathcal{H}_w$. The local fine grid $\mathcal{G}_w$ is constructed as a rectangular grid centered at $(\bar{u}_w, \bar{v}_w)$. The coverage of the grid is defined by the half-widths $\delta_{u,w}$ and $\delta_{v,w}$:
\begin{equation}
\delta_{u,w} = \max_{1 \le t \le T}  \frac{\Delta u^{(t)}}{2},\quad
\delta_{v,w} = \max_{1 \le t \le T}  \frac{\Delta v^{(t)}}{2}.
\label{eq:radius}
\end{equation}
They are set to the maximum main-lobe half-widths among all $T$ configurations to ensure the peak is contained in the $\mathcal{G}_w$.
The step size of $\mathcal{G}_w$ along the $u$ and $v$ directions are defined as
\begin{equation}
\Delta u_w = \frac{\min_{t} \Delta u^{(t)}}{N_x}, \quad
\Delta v_w = \frac{\min_{t} \Delta v^{(t)}}{N_y},
\label{eq:grid_res}
\end{equation}
where $\min_{t} \Delta u^{(t)}$ and $\min_{t} \Delta v^{(t)}$ denote the finest beamwidths achieved during the measurement process, ensuring that the grid is finer than the sharpest main-lobe.

We define $\mathcal{G} = \bigcup\mathcal{G}_w$ as the set of all grid points. For each grid point $\mathbf{p}_{g} \in \mathcal{G}$, we compute the log-likelihood function and select the grid point with the highest log-likelihood value as
\begin{equation}
\mathbf{p}_{\mathrm{F},1,\mathrm{ini}} = \arg\max_{\mathbf{p}_g\in\bigcup\mathcal{G}} \mathcal{L}(\mathbf{p}_g;\mathbf{d}),
\label{eq:top_selection}
\end{equation}
where $\mathbf{p}_{{\mathrm{F},1,\mathrm{ini}}}$ serves as the initial point for gradient ascent (GA) method to search for the local maximum corresponding to the primary false peak position $\mathbf{p}_{{\mathrm{F},1}}$.
To accelerate the convergence of gradient ascent, we represent the primary false peak position in the polar coordinate domain, as the objective function varies more rapidly in the angular domain than in the range domain, following the method proposed in \cite{Teng}.

\subsection{Extension for Multiple False Peaks}
In Section III, we derived a tight lower bound $\text{MSE}_{\text{L}}(\mathbf{p}_{\mathrm{U}}, \mathbf{d})$, considering the impact of the primary false peak on MSE. It is known that practical multi-measurement systems often exhibit multiple high false peaks of comparable magnitude. Each of these peaks poses a non-negligible false detection risk, especially in noisy environments. To design a robust optimization algorithm, it is crucial to account for the other high false peaks.

Motivated by this observation, we extend the $\text{MSE}_{\text{L}}(\mathbf{p}_{\mathrm{U}}, \mathbf{d})$ to incorporate a set of $W$ high false peaks. Based on the definition in \eqref{eq:false_peaks}, we define the false detection event for the $w$-th false peak $\mathbf{p}_{\mathrm{F},w}$ as
\begin{align}
\mathcal{E}_{\mathrm{F},w} \triangleq \left\{\mathcal{L}(\mathbf{p}_{\mathrm{U}}; \mathbf{d}) \leq \mathcal{L}(\mathbf{p}_{\mathrm{F},w};\mathbf{d})\right\}.
\end{align}
Denote by $\Pr(\mathcal{E}_{\mathrm{F},w}\mid\mathbf{p}_{\mathrm{U}},\mathbf{d})$ the false detection probability of $\mathcal{E}_{\mathrm{F},w}$. We further define the expected squared estimation error when the ML estimator mistakenly selects the $w$-th false peak as
\begin{align}
\text{MSE}_{\mathrm{F},w}(\mathbf{p}_{\mathrm{U}}, \mathbf{d}) \triangleq \mathbb{E}\left[\|\hat{\mathbf{p}}_{\mathrm{ML}} - \mathbf{p}_{\mathrm{U}}\|^2 \mid \mathcal{E}_{\mathrm{F},w}, \mathbf{p}_{\mathrm{U}}, \mathbf{d}\right]. 
\label{eq:MSEF,w_definition}
\end{align}
With the above definitions, an objective function for robust array zooming algorithm is formulated as
\begin{equation}
\begin{split}
   \text{MSE}(\mathbf{p}_{\mathrm{U}}, \mathbf{d}) &= \left(1 - \sum_{w=1}^W \Pr(\mathcal{E}_{\mathrm{F},w}\mid\mathbf{p}_{\mathrm{U}},\mathbf{d})\right)  \text{CRB}(\mathbf{p}_{\mathrm{U}}, \mathbf{d}) \\
   &+ \sum_{w=1}^W \Pr(\mathcal{E}_{\mathrm{F},w}\mid\mathbf{p}_{\mathrm{U}},\mathbf{d}) \, \text{MSE}_{{\mathrm{F}},w}(\mathbf{p}_{\mathrm{U}}, \mathbf{d}).
\label{eq:weighted_mse_sum} 
\end{split}
\end{equation}
Based on \eqref{eq:weighted_mse_sum}, the optimization problem is reformulated as
\begin{subequations}
\begin{align}
    (\text{P1})  \qquad &\min_{\mathbf{d}} \max_{\mathbf{p} \in \mathcal{R}} \text{MSE}(\mathbf{p}, \mathbf{d}) \\
    &\text{s.t.} \quad \mathbf{d} \in \mathcal{D}^T. 
\end{align}
\end{subequations}
Since \eqref{eq:weighted_mse_sum} requires evaluating the risks from $W$ high false peaks, we extend the primary false peak search algorithm in Section IV B. Specifically, instead of selecting the global maximum from the grid points, we identify the $W$ local maxima to obtain other prominent false peaks.
After generating $\mathcal{G}_w$ by \eqref{eq:centroid}-\eqref{eq:grid_res},  we compute $\mathcal{L}(\mathbf{p}_w; \mathbf{d})$ for grid points $\mathbf{p}_w \in \mathcal{G}_w$ and select $W$ grid points with the highest log-likelihood values:
\begin{equation}
\mathbf{p}_{\mathrm{F},w,\mathrm{ini}} = \arg\max_{\mathbf{p}_w\in\bigcup\mathcal{G}_w} \mathcal{L}(\mathbf{p}_w; \mathbf{d}), \quad w=1,\ldots,W
\label{eq:top_K_selection}
\end{equation}
where $\mathbf{p}_{\mathrm{F},w,\mathrm{ini}}$ serves as the initial point for the GA method to search for the $w$-th false peak position $\mathbf{p}_{{\mathrm{F}},w}$. Then, we obtain $W$ high false peaks $\{\mathbf{p}_{{\mathrm{F}},w}\}_{w=1}^W$.

\begin{algorithm}
    \caption{High False Peaks Search for Multi-measurement System}
\label{alg:peaks_opt}
\begin{algorithmic}[1]
\Require User sample piont $\mathbf{p}_{\mathrm{U}}^{(n)}\in \mathcal{P}_{\mathrm{U}}$, spacing configuration $\mathbf{d}$, $N_{\mathrm{B}, x}$, $ N_{\mathrm{B}, y}$, SNR, $\kappa$, $\epsilon$.
\Ensure $W$ high false peaks $\{\mathbf{p}_{\mathrm{F},w}\}_{w=1}^W$.
        \State Generate $T$ sets $\mathcal{C}^{(t)}$ by \eqref{eq:uv_t_candidates}- \eqref{eq:deviation constraint}.
        \State Calculate tolerances $\varepsilon_u(t,s)$ and $\varepsilon_v(t,s)$ by \eqref{eq:axis_tol}-\eqref{eq:3db}.
        \State Find $W$ false peak
sets $\mathcal{H}_w$ satisfying \eqref{eq:cons}.
        \For{each $\mathcal{H}_w$}
            \State Generate fine grid $\mathcal{G}_w$ around $(\bar{u}_w, \bar{v}_w)$ by \eqref{eq:centroid}-\eqref{eq:grid_res}.
            \State Obtain $\mathbf{p}_{\mathrm{F},w,\mathrm{ini}}$ by \eqref{eq:top_K_selection}.
            \State Obtain the $w$-th false peak $\mathbf{p}_{{\mathrm{F}},w}$ by GA method.
        \EndFor
\end{algorithmic}
\end{algorithm}

\begin{algorithm}
\caption{Proposed Optimization for Solving Problem (P1)}
\label{alg:spacing_opt}
\begin{algorithmic}[2]
\Require User sample points set $\mathcal{P}_{\mathrm{U}}$, $N_{\mathrm{B}, x}$, $ N_{\mathrm{B}, y}$, SNR, $\mathcal{D}$, $T$, $\kappa$, $\epsilon$.
\Ensure Optimal spacing vector $\mathbf{d}_{\text{opt}}$.
\For{$\mathbf{d} \in \mathcal{D}^T$}
    \For{each user sample position $\mathbf{p}_{\mathrm{U}}^{(n)} \in \mathcal{P}_{\mathrm{U}}$}
        \State obtain $\{\mathbf{p}_{{\mathrm{F}},w}\}_{w=1}^W$ by Algorithm 1.
        \For{each $\mathbf{p}_{\mathrm{F},w}$}
            \State Calculate $\Pr(\mathcal{E}_{\mathrm{F},w}\mid\mathbf{p}_{\mathrm{U}}^{(n)},\mathbf{d})$ by \eqref{eq:pf_qfunction_approximation} and
            \Statex \qquad \qquad \quad $\text{MSE}_{\mathrm{F},w}(\mathbf{p}_{\mathrm{U}}^{(n)}, \mathbf{d})$ by \eqref{eq:MSEF,w_definition}.
        \EndFor   
    \EndFor
    \State Calculate $\text{MSE}(\mathbf{p}_{\mathrm{U}}^{(n)}, \mathbf{d})$ by \eqref{eq:weighted_mse_sum}.
\EndFor
\State Return $\mathbf{d}_{\text{opt}} = \arg\min_{\mathbf{d} \in \mathcal{D}^T} \max_{\mathbf{p} \in \mathcal{R}} \text{MSE}(\mathbf{p}, \mathbf{d})$.
\end{algorithmic}
\end{algorithm}

\subsection{Overall Algorithm}
To solve the optimization problem (P1), the user region $\mathcal{R}$ is partitioned into a uniform grid with sampling points $\mathcal{P}_{\mathrm{U}}=\{\mathbf{p}_{\mathrm{U}}^{(1)}, \ldots, \mathbf{p}_{\mathrm{U}}^{(N)}\}$. Considering that the log-likelihood function exhibits higher sensitivity to angular variations than to the range dimension, we represent the user position $\mathbf{p}_{\mathrm{U}}$ in the polar domain as $(\theta_{\mathrm{U}}, \phi_{\mathrm{U}}, r_{\mathrm{U}})$ and construct a 3D discrete grid centered at the user position $(\theta_{\mathrm{U}}, \phi_{\mathrm{U}}, r_{\mathrm{U}})$ as:
\begin{equation} \label{eq:grid_set}
\begin{aligned}
\mathcal{G}_{\mathrm{U}} \triangleq \Big\{ (\theta^q, \phi^k, r^m) \mid \theta^q &= \theta_{\mathrm{U}} + q \Delta \theta, q \in \mathcal{I}_{N_\theta}, \\
\phi^k &= \phi_{\mathrm{U}} + k \Delta \phi, k \in \mathcal{I}_{N_\phi}, \\
r^m &= r_{\mathrm{U}} + m \Delta r, m \in \mathcal{I}_{N_r} \Big\},
\end{aligned}
\end{equation}
where $\mathcal{I}_N = \{ -\frac{N-1}{2}, \dots, \frac{N-1}{2} \}$, $N_\theta, N_\phi$, and $N_r$ denote the number of grid points for the elevation angle, azimuth angle, and radius, respectively. $\Delta \theta$, $\Delta \phi$, and $\Delta r$ denote the respective grid steps.
Each sampling point $\mathbf{p}_{\mathrm{U}}^{(n)} \in \mathcal{P}_{\mathrm{U}}$ ($n = 1, \dots, N$) corresponds to a coordinate triple $(\theta^q, \phi^k, r^m) \in \mathcal{G}_{\mathrm{U}}$ and is expressed as
\begin{equation} \label{eq:coordinate_mapping}
    \mathbf{p}_{\mathrm{U}}^{(n)} = [r^m \sin\theta^q \cos\phi^k, \; r^m \sin\theta^q \sin\phi^k, \; r^m \cos\theta^q]^{\mathrm{T}},
\end{equation}
yielding $N = N_\theta N_\phi N_r$ grid points.

Since the objective function in (P1) depends on the false peak positions, Algorithm~1 is used to identify the $W$ highest false peaks $\{\mathbf{p}_{{\mathrm{F}},w}\}_{w=1}^W$ for any given user sample position $\mathbf{p}_{\mathrm{U}}^{(n)}\in \mathcal{P}_{\mathrm{U}}$ and configuration $\mathbf{d}\in \mathcal{D}^T$. 
These peak positions are used to calculate the corresponding false detection probabilities and estimation errors, which are substituted into \eqref{eq:weighted_mse_sum} to obtain the MSE.  Algorithm~2 summarizes this overall process to determine the optimal spacing $\mathbf{d}_{\text{opt}}$ that minimizes the maximum MSE across all sampling points in $\mathcal{P}_{\mathrm{U}}$. 

The implementation details of the two algorithms are as follows. In Algorithm~1, for each user sample position $\mathbf{p}_{\mathrm{U}}^{(n)}$ and the spacing configuration $\mathbf{d}$, executing line 1 generates $T$ false peak candidate sets $\mathcal{C}^{(t)}$. Lines 2-3 identify $W$ false peak sets $\mathcal{H}_w$. For each set $\mathcal{H}_w$, lines 5 generates a fine grid around the centroid of the set and lines 6-7 select the top grid point and apply the GA method to refine the position, thereby yielding the $w$-th false peak $\mathbf{p}_{{\mathrm{F}},w}$. After processing all $W$ false peak sets, the Algorithm~1 outputs $W$ high false peaks $\{\mathbf{p}_{\mathrm{F},w}\}_{w=1}^W$.
The Algorithm~2 iterates through all spacing candidates and user sampling positions to evaluate the objective function in \eqref{eq:weighted_mse_sum} and determines the optimal solution according to (P1). 

The complexity of Algorithm~1 primarily arises from two stages: 
(i) \textit{Fine grid search}: Evaluating the log-likelihood function in \eqref{eq:loglikelihood_function} over $W$ local fine grid $\mathcal{G}_w$, each with $N_w$ grid points, across $T$ measurements, yielding $\mathcal{O}(W N_w T N_B)$. 
(ii) \textit{Peak refinement}: Applying GA method for $W$ highest grid points, yielding $\mathcal{O}(W N_1 N_2  N_B)$, where $N_1$ and $N_2$ denote the average number of line search steps in the Armijo backtracking and GA iterations, respectively. 
The total complexity of Algorithm 1 is $\mathcal{O}\left( N_BW (T N_w +  N_1 N_2)\right)$.

Algorithm~2 iterates over $\mathbf{d}\in \mathcal{D}^T$ and $N$ user sample positions, where the spacing candidate set $\mathcal{D}$ has size $N_d$. The outer loops result in $\mathcal{O}(N_d^T N)$ iterations. For each iteration, Algorithm 1 is invoked. Therefore, the overall computational complexity is $\mathcal{O}\left( N_d^T N N_BW (TN_w + N_1 N_2) \right)$.

\section{Numerical Results}
In this section, we present numerical results to evaluate the performance of the proposed array zooming optimization for near-field localization with MA arrays. 
\subsection{Simulation Setup and Benchmark Schemes}
The simulation setup consists of a $N_{\mathrm{B}, x} \times N_{\mathrm{B}, y}$ movable antenna array operating at 6 GHz frequency. The UPA of the BS has equal size in both the $x$-axis and the $y$-axis. The user is located within a conical region with a 120-degree apex angle, and the BS-UE distance ranges from 5 meters to 10 meters, ensuring near-field operation conditions. 
To evaluate the MSE via the sampling grid $\mathcal{G}_{\mathrm{U}}$ defined in \eqref{eq:grid_set}, the numbers of grid points are set to $N_\theta = 5$, $N_\phi = 5$, and $N_r = 3$, yielding a total of $N = 75$ sample points.
The respective grid steps are set to $\Delta \theta = 6^\circ$, $\Delta \phi = 18^\circ$, and $\Delta r = 1$ m.
We adopt $T=2$ measurements to enable multi-measurement fusion for high angular and spatial resolution while suppressing false peak. The antenna spacings are selected from $\mathcal{D}=[1\lambda, 1.1\lambda,\ldots,10\lambda]$ with a step size of $0.1\lambda$, and optimized by the proposed algorithm.

\begin{figure}
    \centering
    \captionsetup{labelsep=period, font=small}
    \includegraphics[width=1\linewidth]{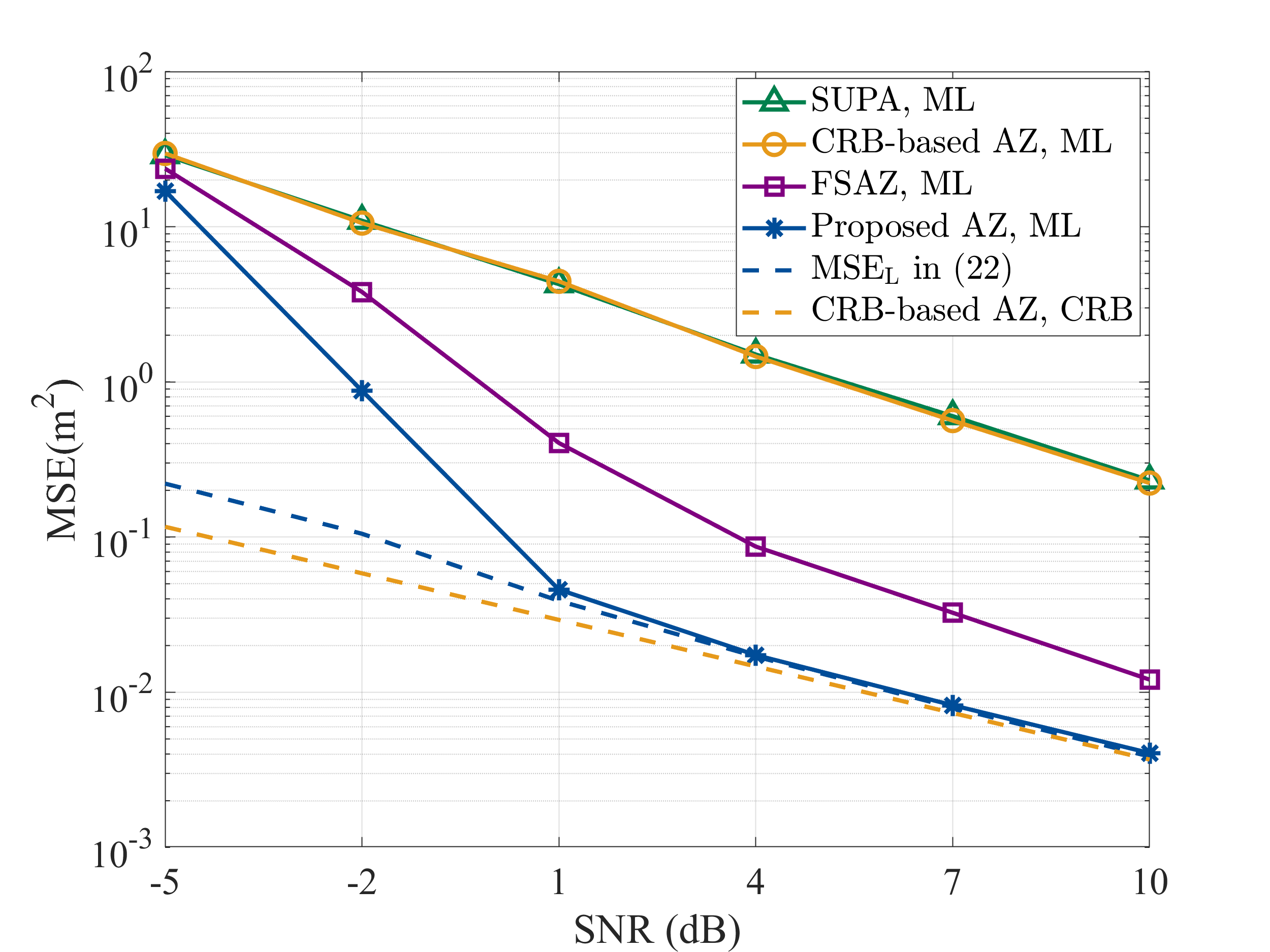}
    \caption{The UE localization performance v.s. SNR.}
    \label{fig:performance}
\end{figure}

\begin{figure}
    \centering
    \captionsetup{labelsep=period, font=small}
    \includegraphics[width=1\linewidth]{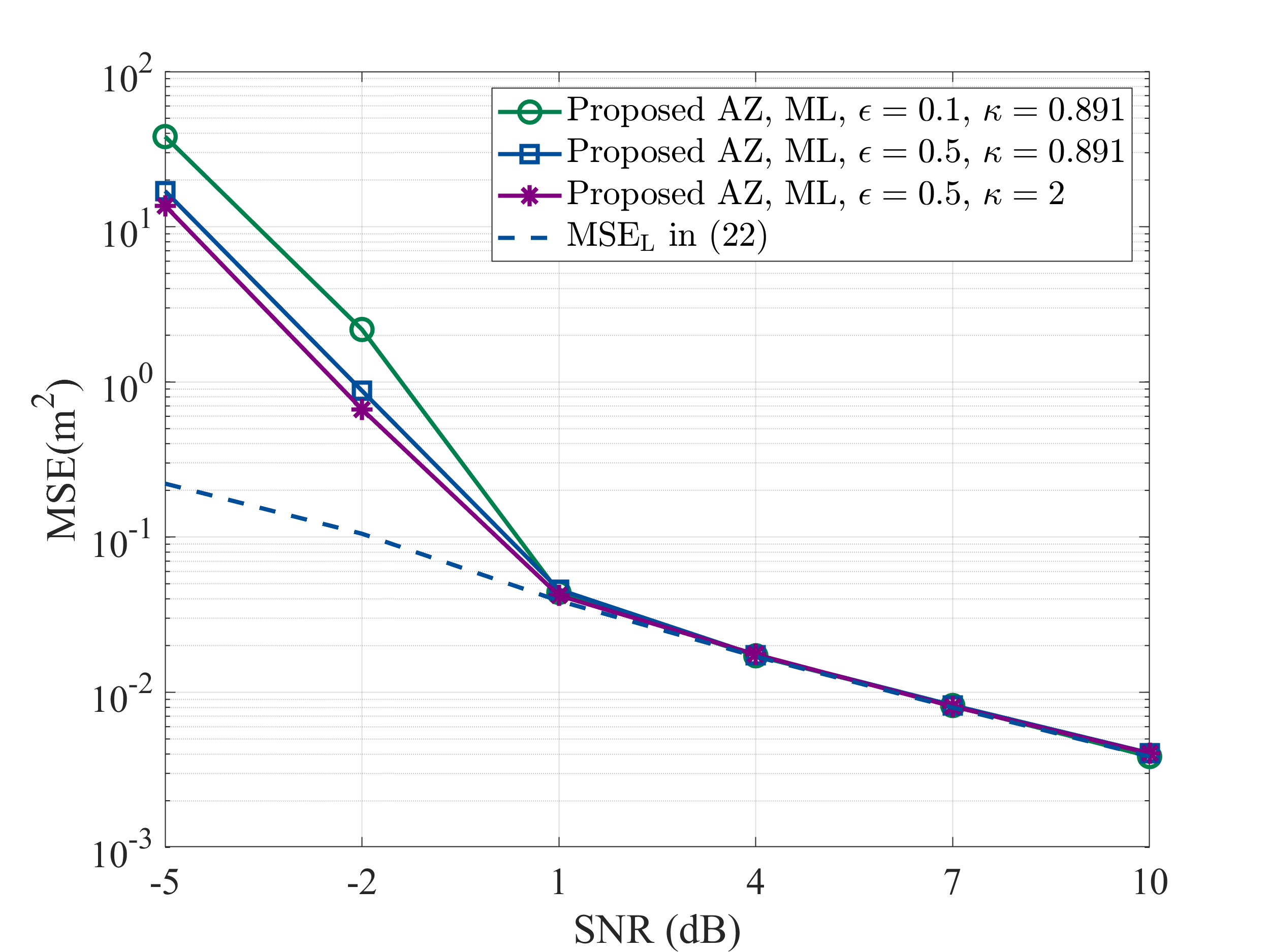}
    \caption{The UE localization performance v.s. SNR with varying $\epsilon$ and $\kappa$.}
    \label{fig:eqsilon_comp}
\end{figure}

To validate the proposed strategy, we compare it with the following baselines:
\begin{enumerate}
    \item Sparse UPA with full aperture (SUPA): The array uses a fixed antenna spacing of $10\lambda$ in each dimension, and only a single measurement is made.
    \item Fixed-spacing array zooming (FSAZ): The scheme employs $T=2$ measurements with a predefined spacing vector $\mathbf{d}=[10\lambda, 1\lambda]$.
    \item CRB-based array zooming: The scheme also adopts $T=2$ measurements and optimizes the spacing vector with the same procedure as the proposed algorithm, except that the CRB is used as the objective function.
\end{enumerate}

For fair comparison, all baselines adopt the same number of antennas as the MA array and all schemes use the ML principle to estimate the user position.

\begin{figure}
    \centering
    \captionsetup{labelsep=period, font=small}
    \includegraphics[width=1\linewidth]{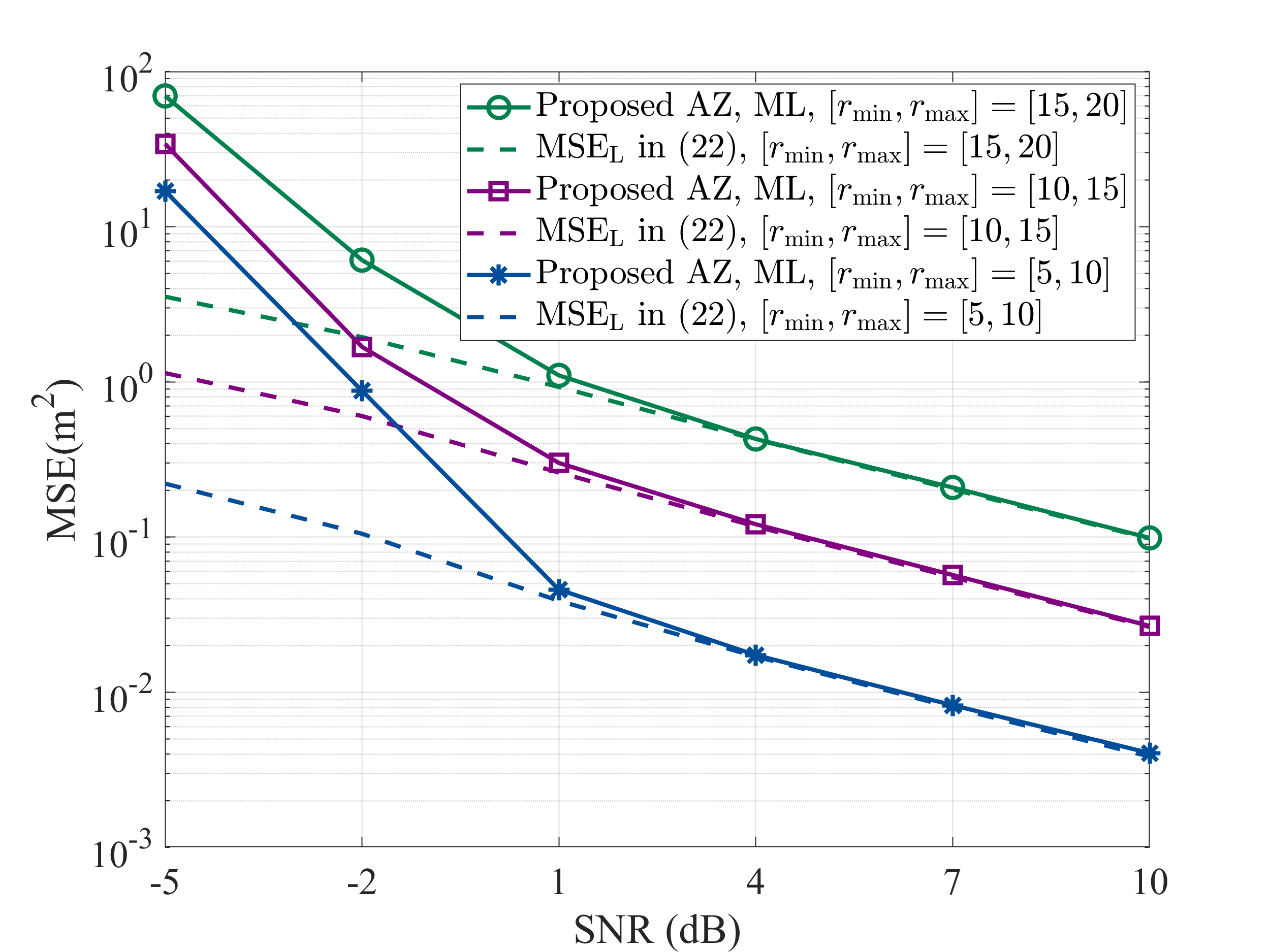}
    \caption{The UE localization performance v.s. SNR with varying the distance range of UE.}
    \label{fig:distance range}
\end{figure}

\begin{figure}
    \centering
    \captionsetup{labelsep=period, font=small}
    \includegraphics[width=1\linewidth]{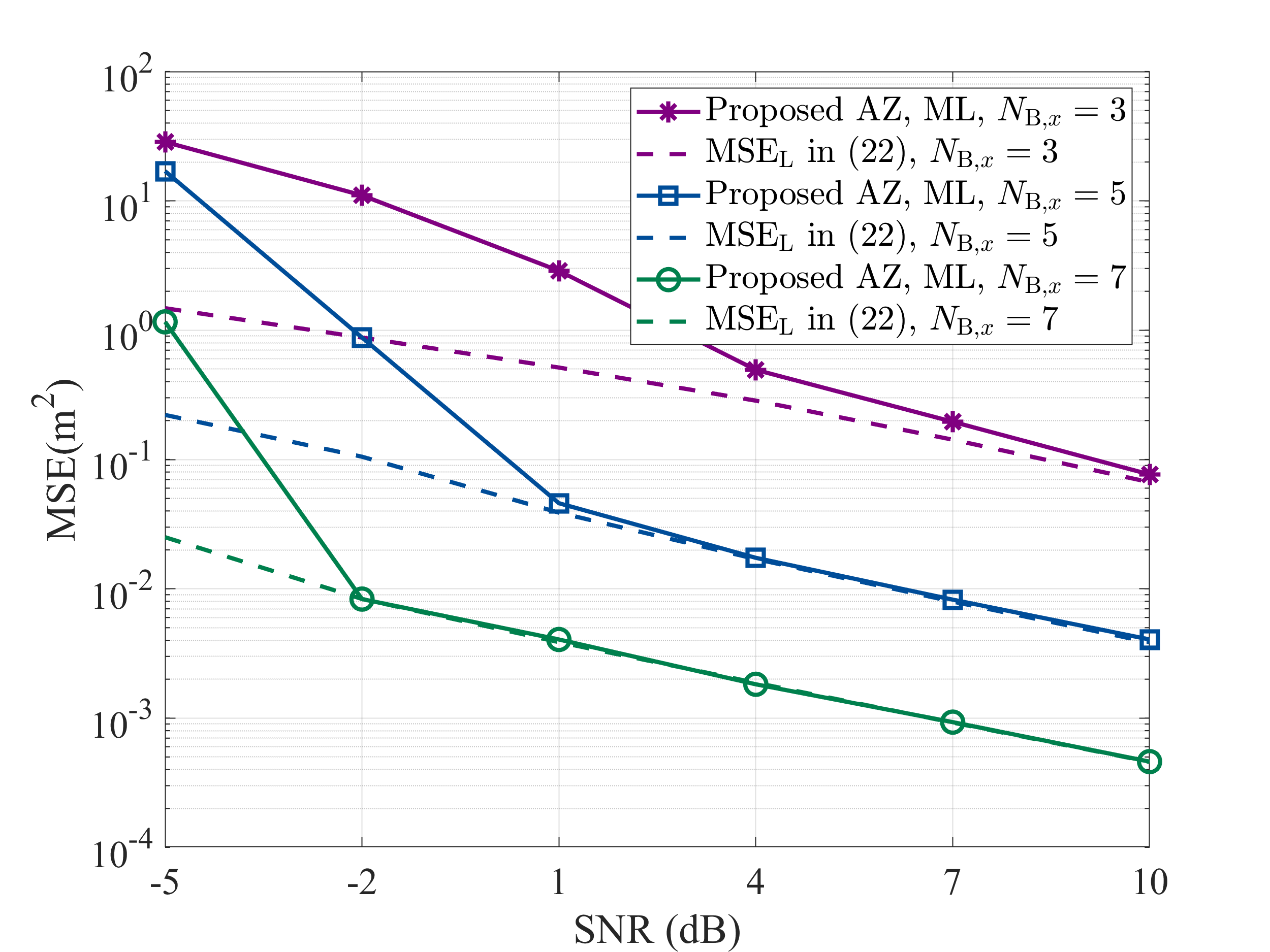}
    \caption{The UE localization performance v.s. SNR with varying the number of MA.}
    \label{fig:number of MA}
\end{figure}

\begin{table}[t]
    \centering
    \caption{Analysis of the optimal spacing $(d^{(2)})^*$ against SNR}
    \label{tab:optimal_spacing_stats}
    \renewcommand{\arraystretch}{1.5}
    \label{tab:optimal_spacing_stats_transposed}
    \setlength{\tabcolsep}{4pt} 
    \begin{tabular}{lcccccc}
    \toprule
    \textbf{Statistical Metric} & \textbf{$-5$ dB} & \textbf{$-2$ dB} & \textbf{$1$ dB} & \textbf{$4$ dB} & \textbf{$7$ dB} & \textbf{$10$ dB} \\
    \midrule
    Mean ($\lambda$) & 8.866 & 8.819 & 8.963 & 9.304 & 9.583 & 9.742 \\
    \midrule
    Std ($\lambda$)  & 1.806 & 1.789 & 0.383 & 0.350 & 0.305 & 0.241 \\
    \bottomrule
    \end{tabular}
\end{table}

\begin{table}[t]
\centering
\caption{False detection probabilities (\%) for different schemes against SNR}
\label{tab:P_f}
\setlength{\tabcolsep}{4pt} 
\begin{tabular}{lcccccc}
\toprule
\textbf{Scheme} & \textbf{$-5$ dB} & \textbf{$-2$ dB} & \textbf{$1$ dB} & \textbf{$4$ dB} & \textbf{$7$ dB} & \textbf{$10$ dB} \\
\midrule
Proposed AZ & 69.18 & 23.00 & 3.28 & 0.00 & 0.00 & 0.00 \\
\midrule
FSAZ & 70.70 & 23.06 & 7.10 & 2.00 & 0.30 & 0.01 \\
\midrule
SUPA & 83.90 & 55.59 & 36.48 & 25.24 & 18.68 & 15.11 \\
\bottomrule
\end{tabular}
\label{table_pf}
\end{table}

\subsection{Numerical Results and Discussions}
\subsubsection{Comparation with baseline schemes}
Fig.~\ref{fig:performance} compares the localization accuracy versus SNR for various baseline schemes and the proposed algorithm. The results show that the optimized scheme outperforms all baseline schemes and approaches $\text{MSE}_\text{L}$ when SNR $\ge0$ dB. At low SNR, more false peaks contribute to MSE, leading to degraded localization accuracy.
The SUPA performs poorly across all SNRs, failing to suppress false peak.
The performance of the FSAZ scheme improves as the SNR increases. The localization performance of the CRB-based array zooming scheme is almost identical to that of SUPA. This is because the CRB only accounts for the estimation resolution, leading to optimal spacings obtained by the optimization being equal to $\mathbf{d}=[10\lambda,10\lambda]$, which is equivalent to SUPA.

The effectiveness of the optimized array zooming strategy is further validated by the false detection probabilities estimated from 1000 Monte Carlo simulations, as illustrated in Table~\ref{tab:P_f}. These values serve as approximations of $\Pr(\mathcal{E}_{\mathrm{F},w}\,|\,\mathbf{p}_{\mathrm{U}},\mathbf{d})$. 
The proposed scheme achieves the lowest false detection probabilities across all SNRs and consequently yields the best localization performance. The FSAZ scheme exhibits a higher false detection probability, while the SUPA scheme shows the highest. The higher false detection probabilities result in the inferior localization performance observed in Fig.~\ref{fig:performance}.

\subsubsection{Optimal Spacings Characteristics Analysis}
To analyze the proposed scheme, we examine the variation of the optimal spacings with respect to SNR.
Through extensive experiments, it is observed that when $T=2$, one of the optimal spacings $(d^{(1)})^*$ is always $10\lambda$, which contributes to maximizing the array aperture and thus achieving higher parameter estimation resolution. Therefore, we focus on the the optimal antenna spacing 
of the other measurement $(d^{(2)})^*$.
Table~\ref{tab:optimal_spacing_stats} presents the mean and standard deviation of $(d^{(2)})^*$, obtained by analyzing the optimal results from 1000 Monte Carlo simulations. 
It can be observed that the proposed scheme adaptively adjusts $(d^{(2)})^*$ according to the SNR. 
At lower SNR, the mean optimal spacing is smaller, indicating that the algorithm prioritizes smaller spacings to suppress false peaks. 
As SNR increases, the mean optimal spacing rises, approaching the maximum aperture and thus improving estimation resolution. 
Furthermore, the proposed scheme exhibits favorable convergence stability across SNRs. 
In the low-SNR regime, the standard deviation is relatively large, reflecting the randomness of optimal solutions under noise-dominated conditions. When SNR $\ge 1$ dB, the standard deviation rapidly decreases, demonstrating that the algorithm comes to a deterministic optimal solution.

\subsubsection{Impact of tolerance $\epsilon$ and beamwidth factor $\kappa$}
Fig.~\ref{fig:eqsilon_comp} investigates the impact of the tolerance parameter $\epsilon$ on localization performance. The parameter $\epsilon$ directly determines the size $W$ of identified high false peaks. A larger $\epsilon=0.5$ retains nearly all candidates, whereas a tighter $\epsilon=0.1$ aggressively filters out more weaker aliases.
When $\text{SNR}\geq0$ dB, both settings converge to $\text{MSE}_{\text{L}}$, as noise fluctuations are insufficient to push ML estimator to lower-magnitude peaks. When $\text{SNR}<0$ dB, the setting with $\epsilon=0.5$ outperforms  that with $\epsilon=0.1$. This is because intense noise can cause any prominent false peak to exceed the true peak. By incorporating a larger set of false peaks into the objective $\text{MSE}$, the algorithm with $\epsilon=0.5$ mitigates diverse false detection risks more effectively. 
$\kappa=2$ corresponds to the first-null bandwidth \cite{beamwidth}. The beamwidth factor $\kappa$ affects the search range of high false peaks, but exerts only a limited effect on performance improvement since it does not determine the number of search regions $W$.

\subsubsection{User location}
We evaluate the localization performance of the proposed algorithm for the user in different regions $[r_{\min}, r_{\max}]$, as illustrated in Fig.~\ref{fig:distance range}. MSE degrades as the UE distance range $[r_{\min}, r_{\max}]$ increases, which stems from the weakened spherical wave-front curvature. As the user moves toward the far-field boundary, phase variations across the ELAA become increasingly linear, reducing range-related Fisher information and deteriorating localization performance.

\subsubsection{BS array size}
We examine the impact of the MA count, $N_{\mathrm{B},x}$, on the proposed algorithm's performance. Fig.~\ref{fig:number of MA} shows that increasing the number of MAs improves the localization accuracy. As $N_{\mathrm{B},x}$ increases, the algorithm converges to the CRB at lower SNR levels. This shows that scaling the MA count not only improves asymptotic precision but also improves the search capability of the algorithm in noise-intensive regimes. A denser MA array sharpens main-lobe resolution and effectively suppresses false peaks, reducing false detection probability and yielding accuracy gains.

\section{Conclusion}
This paper addressed the resolution-ambiguity dilemma in sparse arrays and high hardware cost of large-scale arrays for near-field localization. We have proposed a movable antenna-based array zooming system. We analyzed near-field false peak distribution, derived a tight lower bound $\text{MSE}_\text{L}$ incorporating the false detection probability, and extended it to $\text{MSE}$ considering multiple false peaks. Based on $\text{MSE}$, we propose an optimization algorithm for the multi-measurement array zooming system to suppress false peaks
and minimize the localization error. 
Numerical results demonstrated that the optimized array zooming system effectively reduces the false detection probability and substantially outperforms fixed-spacing arrays and CRB-based baselines.

\appendices

\section{Proof of proposition 1}
\label{app:(14)}
Substituting \eqref{eq:dis} and \eqref{eq:steering_t} into the objective function of \eqref{eq:grating_lobe_optimization} and noting that at a local maximum the phases of the array response vectors must align up to a constant modulo $2\pi$, we obtain the following condition for any solution $\mathbf{p}$ to problem \eqref{eq:grating_lobe_optimization}:
\begin{equation}
\frac{2\pi}{\lambda} \left(r_{(i,j)}^{(t)}(\mathbf{p}_{\mathrm{U}}) - r_{(i,j)}^{(t)}(\mathbf{p})\right) = 2k_{i,j}\pi + \delta, \quad \forall i,j \in \mathcal{I}_{N}
\label{eq:general_phase_cond}
\end{equation}
where $k_{i,j} \in \mathbb{Z}$ and $\delta$ is a constant.
By substituting \eqref{eq:dis} into \eqref{eq:general_phase_cond}, we can reorganize the expression into a polynomial with respect to the indices $i$ and $j$:
\begin{equation}
C_0 + C_1 i + C_2 j+ C_3 i^2 + C_4 j^2 + C_5 ij= k_{i,j}\lambda, \quad \forall i,j \in \mathcal{I}_{N},
\label{eq:LHS}
\end{equation}
with the coefficients defined as:
\begin{subequations}
\begin{align}
C_0 &= r - r_{\mathrm{U}} - \frac{\lambda}{2\pi}\delta, \\
C_1 &= (u_{\mathrm{U}} - u)d^{(t)}, \\
C_2 &= (v_{\mathrm{U}} - v)d^{(t)},\\
C_3 &= \left(\frac{1-u_{\mathrm{U}}^2}{r_{\mathrm{U}}} - \frac{1-u^2}{r}\right)\frac{(d^{(t)})^2}{2}, \\
C_4 &= \left(\frac{1-v_{\mathrm{U}}^2}{r_{\mathrm{U}}} - \frac{1-v^2}{r}\right)\frac{(d^{(t)})^2}{2}, \\
C_5 &= -\left(\frac{u_{\mathrm{U}} v_{\mathrm{U}}}{r_{\mathrm{U}}} - \frac{uv}{r}\right)(d^{(t)})^2.
\end{align}
\label{eq:coefficients}
\end{subequations}

Since \eqref{eq:LHS} holds for all $i \in \mathcal{I}_{N_{\mathrm{B}, x}}$ and $j \in \mathcal{I}_{N_{\mathrm{B}, y}}$, it specifically holds for the subset of indices $i, j \in \{-1, 0, 1\}$. 
By evaluating the polynomial at these points, we can obtain constraints on the coefficients that must hold for a local maximum.

1) By evaluating at the origin $(i=0, j=0)$, we obtain 
\begin{equation}
    C_0 = k_{0,0}\lambda, \quad k_{0,0}\in \mathbb{Z}
\label{eq:c0}
\end{equation}

2) Substituting \eqref{eq:c0} into evaluating along the $x$-axis at $(i=1, j=0)$ and $(i=-1, j=0)$, we obtain:
\begin{subequations}
\begin{align}
C_1 + C_3 &= (k_{1,0}-k_{0,0})\lambda, \quad k_{1,0}\in \mathbb{Z} \label{eq:c1+c3}\\ 
-C_1 + C_3 &= (k_{-1,0}-k_{0,0})\lambda, \quad k_{-1,0}\in \mathbb{Z}
\end{align}
\label{eq:c1_c3}
\end{subequations} 

3) Substituting \eqref{eq:c0} into evaluating along the $y$-axis at $(i=0, j=1)$ and $(i=0, j=-1)$, we can obtain:
\begin{subequations}
\begin{align}
C_2 + C_4 &= (k_{0,1}-k_{0,0})\lambda,
\quad k_{0,1}\in \mathbb{Z}
\label{eq:c2+c4}\\ 
-C_2 + C_4 &= (k_{0,-1}-k_{0,0})\lambda,
\quad k_{0,-1}\in \mathbb{Z}
\end{align}
\label{eq:c2_c4}
\end{subequations}
\setlength{\parskip}{0pt plus 1pt}
4) Substituting \eqref{eq:c0} into evaluating the diagonal cross-term $(i=1, j=1)$, we have
\begin{equation}
C_1 + C_2 + C_3 + C_4 + C_5 = (k_{1,1}-k_{0,0})\lambda, \quad k_{1,1}\in \mathbb{Z}
\label{eq:c1_c5}
\end{equation}

From \eqref{eq:c1_c3}-\eqref{eq:c1_c5}, we can obtain: 
\begin{subequations}
\begin{align}
2C_1 &= \left((k_{1,0}-k_{0,0})-(k_{-1,0}-k_{0,0})\right)\lambda, \label{eq:c1}\\ 
2C_2 &= \left((k_{0,1}-k_{0,0})-(k_{0,-1}-k_{0,0})\right)\lambda,  \label{eq:c2}\\
2C_3 &= \left((k_{1,0}-k_{0,0})+(k_{-1,0}-k_{0,0})\right)\lambda, \label{eq:c3}\\
2C_4 &= \left((k_{0,1}-k_{0,0})+(k_{0,-1}-k_{0,0})\right)\lambda, \label{eq:c4}\\
C_5 &= \left((k_{1,1} - k_{0,0})-(k_{1,0}-k_{0,0})-(k_{0,1}-k_{0,0})\right)\lambda
\label{eq:c5}
\end{align}
\label{eq:c1_c5_result}
\end{subequations}
From \eqref{eq:c1_c5_result}, it is obvious that $2C_1, 2C_2, 2C_3, 2C_4$, and $C_5$ are integer multiples of $\lambda$. Let 
\begin{align}
     2C_1 = k_1\lambda, 2C_2 = k_2\lambda, 2C_3 = k_3\lambda, 2C_4 = k_4\lambda, C_5 = k_5\lambda,
    \label{eq:c_k}
\end{align}
where $k_1, \dots, k_5 \in \mathbb{Z}$. Substituting \eqref{eq:c_k} and \eqref{eq:c0} into the left-hand side of \eqref{eq:LHS}, the terms associated with $i$ and $j$ can be rewritten as 
\begin{equation}
    k_{0,0}\lambda+\frac{\lambda}{2}(k_1 i + k_3 i^2) + \frac{\lambda}{2}(k_2 j + k_4 j^2) + k_5 \lambda ij. 
    \label{eq:52}
\end{equation}
Notice from \eqref{eq:c1} and \eqref{eq:c3} that $k_3 - k_1 = 2(k_{-1,0} - k_{0,0})$, which is an even integer. This indicates that $k_1$ and $k_3$ share the same parity. Consequently, for any integer $i\in \mathcal{I}_N$, the term $k_1 i + k_3 i^2 = k_1(i+i^2) + (k_3-k_1)i^2$ is always an even integer as $i+i^2 = i(i+1)$ is inherently even. This mathematical property guarantees that $\frac{\lambda}{2}(k_1 i + k_3 i^2)$ remains an integer multiple of $\lambda$ for any arbitrary index $i\in \mathcal{I}_{N_{\mathrm{B}, x}}$. By symmetry, the same result holds for the term $\frac{\lambda}{2}(k_2 j + k_4 j^2)$ for any arbitrary index $j\in \mathcal{I}_{N_{\mathrm{B}, y}}$. Thus evaluating \eqref{eq:LHS} for indices outside the subset $\{-1, 0, 1\}$ does not impose any additional constraints on the coefficients in \eqref{eq:coefficients}.  
By substituting \eqref{eq:coefficients} into \eqref{eq:c_k}, we obtain the conditions presented in \eqref{eq:conditions}.

The derivation above has provided the necessary conditions for $\mathbf{p}$ to be a solution to problem \eqref{eq:grating_lobe_optimization}. To prove sufficiency, we assume $\mathbf{p}$ satisfies \eqref{eq:conditions}. Substituting \eqref{eq:conditions} into \eqref{eq:coefficients}, we equivalently have \eqref{eq:c_k} for integers $k_1, \dots, k_5$. Substituting \eqref{eq:c_k} into the left-hand side of \eqref{eq:LHS} yields \eqref{eq:52}.
As proven earlier, the terms $\frac{\lambda}{2}(k_1 i + k_3 i^2)$ and $\frac{\lambda}{2}(k_2 j + k_4 j^2)$ are always integer multiples of $\lambda$, then \eqref{eq:52} evaluates to an integer multiple of $\lambda$ for all $i \in \mathcal{I}_{N_{\mathrm{B}, x}}$ and $j \in \mathcal{I}_{N_{\mathrm{B}, y}}$. Substituting this result back into the left-hand side of \eqref{eq:general_phase_cond} yields $2\pi k_{i,j} + \delta$, which proves the sufficiency. 

As a result, the set of conditions in \eqref{eq:conditions} is both necessary and sufficient for $\mathbf{p}$ to be a solution to problem \eqref{eq:grating_lobe_optimization}. This completes the proof of Proposition~1.

\section{Proof of theorem 1}
\label{app:pf_derivation}
The false detection probability can be expressed as
\begin{equation}
\Pr(\mathcal{E}_{\mathrm{F},1}\,|\,\mathbf{p}_{\mathrm{U}},\mathbf{d})= \Pr(\mathcal{L}(\mathbf{p}_{\mathrm{F},1};\mathbf{d})>\mathcal{L}(\mathbf{p}_{\mathrm{U}};\mathbf{d})).
\label{eq:23}
\end{equation}
Based on \eqref{eq:loglikelihood_function}, $\Pr(\mathcal{E}_{\mathrm{F},1}\,|\,\mathbf{p}_{\mathrm{U}},\mathbf{d})$ is equal to
\begin{equation}
\Pr\left(\sum_{t=1}^T \frac{|\langle \mathbf{A}^{(t)}_{\mathrm{F},1}, \mathbf{Y}^{(t)} \rangle|^2}{\sigma^2\|\mathbf{A}_{\mathrm{F},1}^{(t)}\|_F^2} > \sum_{t=1}^T \frac{|\langle \mathbf{A}^{(t)}_{\mathrm{U}}, \mathbf{Y}^{(t)} \rangle|^2}{\sigma^2\|\mathbf{A}^{(t)}_{\mathrm{U}}\|_F^2}\right).
\label{eq:P_f inequality}
\end{equation}
We define normalized projections as
\begin{equation}
x_{\mathrm{F},1}^{(t)} = \frac{\langle \mathbf{A}^{(t)}_{\mathrm{F},1}, \mathbf{Y}^{(t)} \rangle}{\sigma \|\mathbf{A}^{(t)}_{\mathrm{F},1}\|_F} \quad \text{and} \quad x_{\mathrm{U}}^{(t)} = \frac{\langle \mathbf{A}^{(t)}_{\mathrm{U}}, \mathbf{Y}^{(t)} \rangle}{\sigma \|\mathbf{A}^{(t)}_{\mathrm{U}}\|_F}. 
\end{equation}
Then, the false detection probability can be written as 
\begin{equation}
\Pr(\mathcal{E}_{\mathrm{F},1}\mid\mathbf{p}_{\mathrm{U}},\mathbf{d})  = \Pr\left( \sum_{t=1}^T \left( |x_{\mathrm{F},1}^{(t)}|^2 - |x_{\mathrm{U}}^{(t)}|^2 \right) > 0 \right).
\end{equation} 
Let $\bm{x}^{(t)} = [x_{\mathrm{F},1}^{(t)}, x_{\mathrm{U}}^{(t)}]^{\mathrm{T}}$. Since $\mathbf{Y}^{(t)}$ is a complex Gaussian random matrix according to  \eqref{eq:signal_t}, the $x_{\mathrm{F},1}^{(t)}$ and $x_{\mathrm{U}}^{(t)}$, being linear combinations of $\mathbf{Y}^{(t)}$, also follow complex Gaussian distributions.  Then, the distribution of $\bm{x}^{(t)}$ is given by
\begin{equation} \bm{x}^{(t)} \sim \mathcal{CN}(\boldsymbol{\mu}^{(t)}, \mathbf{\Sigma}^{(t)}),
\end{equation} 
where $\mathbf{\Sigma}^{(t)}= 
\begin{bmatrix} 1 & \rho^{(t)} \\ (\rho^{(t)})^* & 1 \end{bmatrix}$ with the spatial correlation coefficient $\rho^{(t)}=\frac{\langle \mathbf{A}^{(t)}_{\mathrm{F},1}, \mathbf{A}_{\mathrm{U}}^{(t)} \rangle}{\|\mathbf{A}_{\mathrm{F},1}^{(t)}\|_F \|\mathbf{A}_{\mathrm{U}}^{(t)}\|_F}$.
The mean vector $\boldsymbol{\mu}^{(t)}$ is given by
\begin{equation} 
\bm{\mu}^{(t)} = \mathbb{E}[\bm{x}^{(t)}] = \frac{\beta^{(t)}}{\sigma} \|\mathbf{A}_{\mathrm{U}}^{(t)}\|_F 
\begin{bmatrix} 
\rho^{(t)} \\ 1 
\end{bmatrix} .
\end{equation}
Let $z^{(t)} = |x_{\mathrm{F},1}^{(t)}|^2 - |x_{\mathrm{U}}^{(t)}|^2$, which can be written as $z^{(t)} = (\bm{x}^{(t)})^{\mathrm{H}} \mathbf{J} \bm{x}^{(t)}$ with $ \mathbf{J} = \begin{bmatrix} 
1 & 0 \\ 0 & -1 
\end{bmatrix}$. 
The characteristic function of a complex non-central Gaussian quadratic form is given by \cite{CF}
\begin{equation} 
\phi_{z^{(t)}}(u;\mathbf{p}_{\mathrm{U}},\mathbf{d})  = \frac{\exp\left( \mathrm{j}u (\boldsymbol{\mu}^{(t)})^{\mathrm{H}} \mathbf{J} (\mathbf{I} - \mathrm{j}u \mathbf{\Sigma}^{(t)} \mathbf{J})^{-1} \boldsymbol{\mu}^{(t)} \right)}{\det(\mathbf{I} - \mathrm{j}u \mathbf{\Sigma}^{(t)} \mathbf{J})}.
\end{equation} 

As $T$ measurements are independent, the characteristic function of  $z = \sum_{t=1}^T z^{(t)}$ is given by
\begin{equation} \phi_z(u;\mathbf{p}_{\mathrm{U}},\mathbf{d}) = \prod_{t=1}^T \phi_{z^{(t)}}(u;\mathbf{p}_{\mathrm{U}},\mathbf{d}).
\end{equation} 
Using the Gil-Pelaez inversion formula for the cumulative distribution function \cite{Gil_Pelaez}, the probability is given by
\begin{equation} 
\Pr(\mathcal{E}_{\mathrm{F},1}\mid\mathbf{p}_{\mathrm{U}},\mathbf{d}) = \frac{1}{2} + \frac{1}{\pi} \int_0^\infty \frac{\text{Im}[\phi_z(u;\mathbf{p}_{\mathrm{U}},\mathbf{d})]}{u} du.
\end{equation}
This completes the proof of Theorem~1.

\section{Proof of theorem 2}
\label{app:pf_derivation_Q}
Based on \eqref{eq:P_f inequality}, the false detection probability can be represented as
\begin{equation}
\Pr\left(\sum_{t=1}^T \frac{|\langle \mathbf{A}^{(t)}_{\mathrm{F},1}, \mathbf{Y}^{(t)} \rangle|^2}{\|\mathbf{A}_{\mathrm{F},1}^{(t)}\|_F^2} > \sum_{t=1}^T \frac{|\langle \mathbf{A}^{(t)}_{\mathrm{U}}, \mathbf{Y}^{(t)} \rangle|^2}{\|\mathbf{A}^{(t)}_{\mathrm{U}}\|_F^2}\right).
\label{eq:P_f_inequality_2}
\end{equation}
For the received signal model in \eqref{eq:signal_t}, the inner product magnitude for any candidate $\mathbf{p}$ is identically expanded as
\begin{equation}
\begin{aligned}
&|\langle \mathbf{A}(\mathbf{p}, d^{(t)}), \mathbf{Y}^{(t)} \rangle|^2 = |\beta^{(t)}|^2 |\langle \mathbf{A}(\mathbf{p}, d^{(t)}), \mathbf{A}_{\mathrm{U}}^{(t)} \rangle|^2 \\
&+ 2 \Re\left\{ (\beta^{(t)} \langle \mathbf{A}(\mathbf{p}, d^{(t)}), \mathbf{A}_{\mathrm{U}}^{(t)} \rangle)^* \langle \mathbf{A}(\mathbf{p}, d^{(t)}), \mathbf{N}^{(t)} \rangle \right\} \\
&+ |\langle \mathbf{A}(\mathbf{p}, d^{(t)}), \mathbf{N}^{(t)} \rangle|^2.
\end{aligned}
\label{eq:inner_product_expansion}
\end{equation}
Substituting this expansion into \eqref{eq:P_f inequality}, we obtain
\begin{equation}
\Pr(\mathcal{E}_{\mathrm{F},1}\mid\mathbf{p}_{\mathrm{U}},\mathbf{d}) = \Pr\left(N(\mathbf{p}_{\mathrm{U}}, \mathbf{d}) + \Delta N(\mathbf{p}_{\mathrm{U}}, \mathbf{d}) > S(\mathbf{p}_{\mathrm{U}}, \mathbf{d})\right),
\label{eq:PF_de}
\end{equation}
where 
\begin{equation}
N(\mathbf{p}_{\mathrm{U}}, \mathbf{d}) = 2 \Re \left\{ \sum_{t=1}^{T} (\beta^{(t)})^* \langle \mathbf{V}_t, \mathbf{N}^{(t)} \rangle \right\},
\end{equation}
with $\mathbf{V}_t = \frac{\langle \mathbf{A}^{(t)}_{\mathrm{F},1}, \mathbf{A}^{(t)}_{\mathrm{U}} \rangle}{\| \mathbf{A}^{(t)}_{\mathrm{F},1} \|_F^2} \mathbf{A}^{(t)}_{\mathrm{F},1} - \mathbf{A}^{(t)}_{\mathrm{U}}$, 
\begin{equation}
\Delta N(\mathbf{p}_{\mathrm{U}}, \mathbf{d})=\sum_{t=1}^T \left( \frac{|\langle \mathbf{A}^{(t)}_{\mathrm{F},1}, \mathbf{N}^{(t)} \rangle|^2}{\|\mathbf{A}^{(t)}_{\mathrm{F},1}\|_F^2} - \frac{|\langle \mathbf{A}^{(t)}_{\mathrm{U}}, \mathbf{N}^{(t)} \rangle|^2}{\|\mathbf{A}^{(t)}_{\mathrm{U}}\|_F^2} \right)
\end{equation}
and 
\begin{equation}
\begin{aligned}
S(\mathbf{p}_{\mathrm{U}}, \mathbf{d}) = \sum_{t=1}^{T} | \beta^{(t)} |^2 \left( \| \mathbf{A}^{(t)}_{\mathrm{U}} \|_F^2 - \frac{| \langle \mathbf{A}^{(t)}_{\mathrm{F},1}, \mathbf{A}^{(t)}_{\mathrm{U}} \rangle |^2}{\| \mathbf{A}^{(t)}_{\mathrm{F},1} \|_F^2} \right).
\end{aligned}
\end{equation}
Since the elements of the noise matrix $\mathbf{N}^{(t)}$ are identically distributed complex Gaussian random variables with variance $\sigma^2$, $N(\mathbf{p}_{\mathrm{U}}, \mathbf{d})$ is a zero-mean real Gaussian random variable distributed as $\mathcal{N}\big(0, \sigma_{N(\mathbf{p}_{\mathrm{U}}, \mathbf{d})}^2\big)$ with 
\begin{equation}
\sigma_{N(\mathbf{p}_{\mathrm{U}}, \mathbf{d})}^2 = 2\sigma^2\sum_{t=1}^{T}  | \beta^{(t)} |^2  \| \mathbf{V}_t \|_F^2.
\label{eq:N_var}
\end{equation}
Notice that the squared Frobenius norm of $\mathbf{V}_t$ can be simplified as
\begin{equation}
\| \mathbf{V}_t \|_F^2 = \| \mathbf{A}^{(t)}_{\mathrm{U}} \|_F^2 - \frac{| \langle \mathbf{A}^{(t)}_{\mathrm{F},1}, \mathbf{A}^{(t)}_{\mathrm{U}} \rangle |^2}{\| \mathbf{A}^{(t)}_{\mathrm{F},1} \|_F^2}.
\label{eq:V_t}
\end{equation}
Substituting \eqref{eq:V_t} into \eqref{eq:N_var}, we obtain:
\begin{equation}
\sigma_{N(\mathbf{p}_{\mathrm{U}}, \mathbf{d})}^2 = 2 \sigma^2 S(\mathbf{p}_{\mathrm{U}}, \mathbf{d}).
\label{eq:guanxi}
\end{equation}

Observe that $N(\mathbf{p}_{\mathrm{U}}, \mathbf{d})$ is a zero-mean real Gaussian random variable with variance $2\sigma^2 S(\mathbf{p}_{\mathrm{U}}, \mathbf{d})$. In contrast, the quadratic perturbation $\Delta N(\mathbf{p}_{\mathrm{U}}, \mathbf{d})$ consists of quadratic combinations of the Gaussian noise vector, and its variance scales with $\sigma^4$. 
In the high SNR regime, the threshold $S(\mathbf{p}_{\mathrm{U}}, \mathbf{d}) > 0$ in \eqref{eq:PF_de} remains constant. Since the variance of $N(\mathbf{p}_{\mathrm{U}}, \mathbf{d})$ dominates that of $\Delta N(\mathbf{p}_{\mathrm{U}}, \mathbf{d})$ as $\sigma^2 \to 0$, the tail probability of the sum $N(\mathbf{p}_{\mathrm{U}}, \mathbf{d}) + \Delta N(\mathbf{p}_{\mathrm{U}}, \mathbf{d})$ exceeding the threshold is asymptotically determined by $N(\mathbf{p}_{\mathrm{U}}, \mathbf{d})$ alone. Consequently, we obtain the following asymptotic equivalence:
\begin{align}
    &\Pr\big( N(\mathbf{p}_{\mathrm{U}}, \mathbf{d}) + \Delta N(\mathbf{p}_{\mathrm{U}}, \mathbf{d}) > S(\mathbf{p}_{\mathrm{U}}, \mathbf{d}) \big) \nonumber \\
    &\qquad \sim \Pr\big( N(\mathbf{p}_{\mathrm{U}}, \mathbf{d}) > S(\mathbf{p}_{\mathrm{U}}, \mathbf{d}) \big), 
\end{align}
where the notation $\sim$ indicates that the ratio of the two probabilities approaches $1$ as $\sigma^2 \to 0$, i.e.,
\begin{equation}
    \lim_{\sigma^2 \to 0} \frac{\Pr\big( N(\mathbf{p}_{\mathrm{U}}, \mathbf{d}) + \Delta N(\mathbf{p}_{\mathrm{U}}, \mathbf{d}) > S(\mathbf{p}_{\mathrm{U}}, \mathbf{d}) \big)}{\Pr\big( N(\mathbf{p}_{\mathrm{U}}, \mathbf{d}) > S(\mathbf{p}_{\mathrm{U}}, \mathbf{d}) \big)} = 1.
    \label{eq:lim}
\end{equation}
Since $N(\mathbf{p}_{\mathrm{U}}, \mathbf{d})$ follows a real Gaussian distribution, its tail probability can be exactly evaluated using the standard $Q$-function as
\begin{equation}
    \Pr\big( N(\mathbf{p}_{\mathrm{U}}, \mathbf{d}) > S(\mathbf{p}_{\mathrm{U}}, \mathbf{d}) \big) = Q \left( \frac{S(\mathbf{p}_{\mathrm{U}}, \mathbf{d})}{\sigma_{N(\mathbf{p}_{\mathrm{U}}, \mathbf{d})}} \right).
    \label{eq:Q-fun}
\end{equation}
Substituting \eqref{eq:Q-fun} into \eqref{eq:lim}, together with \eqref{eq:guanxi}, we obtain \eqref{eq:pf_qfunction_approximation}.
This completes the proof of Theorem~2.

\end{CJK}
\end{document}